\definecolor{blue}{rgb}{0,0,1}  
\newtheorem{theorem}{Theorem}
\newtheorem{lemma}{Lemma}
\newtheorem{remark}{Remark}
\newcommand{\beq}{\begin{equation}}
\newcommand{\eeq}{\end{equation}}
\newcommand{\beqa}{\begin{eqnarray}}
\newcommand{\eeqa}{\end{eqnarray}}
\newcommand{\paren}[1]{\left(#1\right)}
\newcommand{\field}[1]{\ensuremath{\mathbb{#1}}}
\newcommand{\R}{\ensuremath{\field{R}}} 
\newcommand{\I}[1]{\ensuremath{\mathsf{1}_{\left\{#1\right\}}}} 
\begin{document}

	\title{Optimal Privacy-Aware Dynamic Estimation}
 \author{Chuanghong Weng, Ehsan Nekouei, and Karl H. Johansson
\thanks{
	C. Weng and E. Nekouei are with the Department of Electrical Engineering, City University of Hong Kong. E-mail: {\tt \{cweng7-c,enekouei\}@cityu.edu.hk}. K. H. Johansson is with the School of Electrical Engineering and Computer Science, KTH Royal Institute of Technology. He is also affiliated with Digital Futures. E-mail: {\tt kallej@kth.se}.

The work was partially supported by the Research Grants Council of Hong Kong under Project CityU 21208921, a grant from Chow Sang Sang Group Research Fund sponsored by Chow Sang Sang Holdings International Limited, the Knut and Alice Wallenberg Foundation, the
Swedish Foundation for Strategic Research, and the Swedish Research Council.}}
	
	\maketitle
 \thispagestyle{empty}

\pagestyle{empty}

\begin{abstract}
In this paper, we develop an information-theoretic framework for the optimal privacy-aware estimation of the states of a (linear or nonlinear) system. In our setup, a private process, modeled as a first-order Markov chain, derives the states of the system, and the state estimates are shared with an untrusted party who might attempt to infer the private process based on the state estimates. As the privacy metric, we use the mutual information between the private process and the state estimates. We first show that the privacy-aware estimation is a closed-loop control problem wherein the estimator controls the belief of the adversary about the private process. We also derive the Bellman optimality principle for the optimal privacy-aware estimation problem, which is used to study the structural properties of the optimal estimator. We next develop a policy gradient algorithm, for computing an optimal estimation policy, based on a novel variational formulation of the mutual information. We finally study the performance of the optimal estimator in a building automation application.
\end{abstract}

\section{Introduction}
\subsection{Motivation}
Privacy is a major concern for the users of networked control systems (NCSs), as these systems increasingly rely on third-party computing entities, \emph{e.g.,} cloud computing units. However, these computing entities may act as an honest-but-curious adversary who has lawful access to the sensor measurements of a system but may attempt to infer private information from the measurements.  For instance, the occupancy of a building, which is highly sensitive, can be accurately inferred from its CO$_2$ measurements.

Privacy breaches may have serious consequences for the designers and operators of NCSs, such as smart buildings and intelligent transportation systems. Thus, it is necessary to address the privacy concerns during the design process. In this paper, we will develop an information-theoretic framework for the optimal privacy-aware design of state estimators for NCSs.

\subsection{Contributions}
We study the optimal privacy-aware estimation for a system, as shown in Fig. \ref{Fig.EstInfSys}. In our setup, a private process drives the states of the system, and the state estimates are shared with an adversary who might use the state estimates to infer the private process. We consider the class of randomized estimators in which an estimator randomly selects a value as its output according to a probability distribution. In our formulation, the private process is modeled as a first-order Markov chain, and system dynamics can be either linear or nonlinear. 
\begin{figure}[H]
	\centering
	\includegraphics[width=0.6\textwidth]{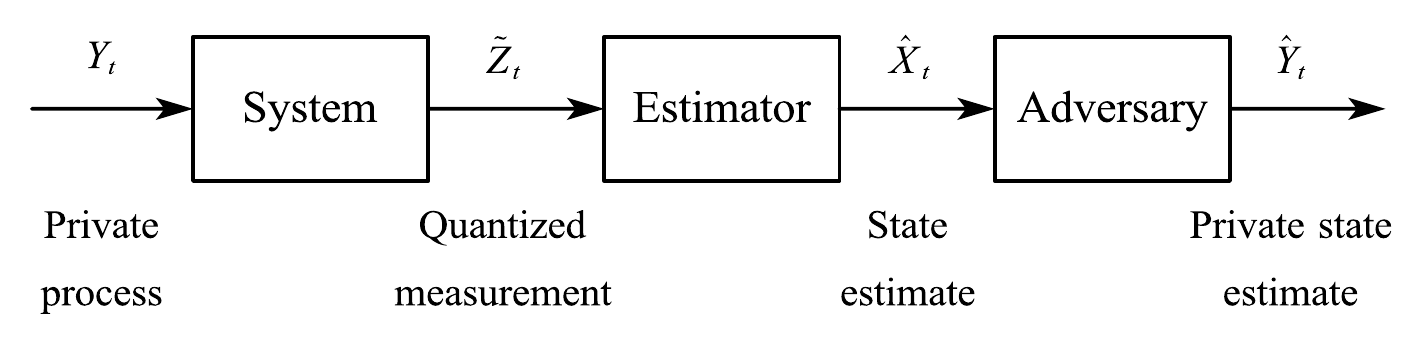}
	\caption{The privacy-aware state estimation setup. }
	\label{Fig.EstInfSys}
\end{figure}
We cast the optimal privacy-aware design of the estimator as an optimization problem wherein the objective is to minimize a linear combination of the estimation loss and a penalty on the leakage of private information via the estimator, captured by the mutual information between the private process and the state estimates. We first show that the optimal privacy-aware estimation is, in fact, an optimal control problem wherein the estimator controls the belief of the adversary about the private process.  Thus, different from the classical estimation, the privacy-aware estimation becomes a closed-loop control problem.
We derive the Bellman optimality principle associated with the optimal privacy-aware estimation problem and show that the optimal estimator consists of a forward filtering equation that computes the adversary's belief.

We next develop a gradient-based algorithm for computing an optimal estimation policy. To this end, we first derive an expression for the gradient of the objective function of the estimator design problem. We then develop a low-complexity variational solution for computing the log-likelihood functions that arise in the gradient expression.  We finally use a building automation application to demonstrate the effectiveness of our framework in ensuring privacy, and compare its performance with the additive noise approach to privacy.

\subsection{Related work}

The information-theoretic approach to privacy in dynamic settings was studied in the literature.  Li \emph{et al.} considered preserving the privacy of a household's electricity demand using a rechargeable electricity storage device in \cite{li2018information}. They studied the optimal charging policy of the storage device using mutual information as the privacy metric. 
The privacy filter design problem for a fully observable Markov process was studied in  \cite{erdemir2020privacy} using mutual information as the privacy metric. The authors in \cite{mochaourab2018private} and \cite{cavarec2021designing} considered the privacy filter design for hidden Markov models using conditional entropy and mutual information, respectively, as privacy metrics.   They proposed greedy and receding horizon solutions for the design of privacy filters. In \cite{molloy2023smoother}, the authors studied the controller design problem for a partially observed Markov decision process to hinder the estimation of its state by an adversary. In their framework, the inference capability of the adversary was captured by the conditional entropy of the state trajectory given measurements and controls. The privacy filter design problem for cloud-based control was studied in \cite{tanaka2017directed, nekouei2019information} using directed information as the privacy metric. 

We note that additive perturbation techniques have been used to ensure privacy in dynamic settings, \emph{e.g.,} see \cite{cortes2016differential} and references therein. The authors in \cite{le2013differentially} proposed a privacy-preserving filtering scheme based on differential privacy. In \cite{huang2014cost, mo2016privacy}, the authors developed additive noise mechanisms for protecting the privacy of the initial state in average consensus problems. The authors in \cite{kawano2020design} established a connection between the concept of differential privacy and the input observability of a control system. Gohari \emph{et al.} developed a policy synthesis algorithm for protecting the state transition probability in Markov decision process (MDP) using a Dirichlet mechanism with a bounded privacy protection cost. A linearly-solvable MDP solution for privacy-aware demand response in electricity grids was developed in \cite{hassan2021privacy}. The reward information protection for MDPs was studied in \cite{wang2019privacy} by adding functional noise to the value function. 

\subsection{Outline}
The remainder of the paper is organized as follows. Section \ref{Sec:Prob} presents our system model and problem formulation. Section \ref{Sec:Stru} investigates the structural properties of the optimal estimator. Section \ref{Sec:Comp} presents the policy gradient algorithm for the optimal estimator design.  Section \ref{Sec:Sim} presents the numerical results, followed by the concluding remarks in Section \ref{Sec:Con}.
\subsection{Notation}
 We denote random variables with uppercase letters, \emph{e.g.,} $\paren{X,Y}$, and denote their realization with lowercase letters, \emph{e.g.,} $\paren{x,y}$. In what follows, we use  $X^T$ as the shorthand for $\left[X_0,X_1,...,X_T \right]$.   We use $p\paren{X}$ and $P\paren{X}$ to represent probability density and probability mass functions, respectively. We also use  
	$p(\left. X\right |Y)$ and $P\paren{\left. X\right|Y}$ to represent the conditional density function and the probability probability mass function, respectively. 
 Given two random variables $X,Y$, $H\paren{\left. X \right|Y}$ and $I\paren{X;Y}$ denote the conditional entropy of $X$ given $Y$, and the mutual information between $X$ and $Y$, respectively. We use   
	$H\paren{X}$ to denote the discrete  entropy of $X$.
	 The Kullback–Leibler divergence (KL-divergence) between distributions $P\paren{\cdot}$ and $Q\paren{\cdot}$ is denoted by $D_{KL}\left[\left.P\paren{X}\right\|Q\paren{X}\right]$.

\section{System Model and Problem Formulation}\label{Sec:Prob}
Consider a  stochastic system in which the state of the system evolves according to the stochastic kernel
\begin{align}\label{Eq.Dynamics-state}
p(\left. X_{t+1}\right |X_t,Y_t)
\end{align}
where $X_t$ takes values 
 in $\mathcal{X}\subseteq\R^n$ is the system state at time $t$, $Y_t\in \mathcal{Y}$ is a discrete random variable encoding the private information at time $t$. The stochastic kernel in \eqref{Eq.Dynamics-state} specifies the conditional density of the state $X_{t+1}$ at time $t+1$  given $X_t$ and $Y_t$. We assume that the private process $\left\{Y_t\right\}_t$ is a first-order Markov process with the stochastic kernel $P\paren{\left. Y_{t+1} \right|Y_t}$, which specifies the conditional distribution of $Y_{t+1}$ given $Y_t$. We use $Z_t$ to denote the sensor measurements at time $t$, which takes values in $\mathcal{Z}\subseteq\R^r$ and evolves according to the stochastic kernel $p\paren{\left. Z_{t} \right|X_t}$. 
 
The above model allows us to study the privacy-aware estimation in nonlinear settings where the state and measurements evolve according to 
\begin{align}
	X_{t+1}&=f\left( X_t,Y_t,W_t \right)\nonumber\\
	Z_t&=g\left( X_t,V_t \right),\nonumber
\end{align}
where $W_t$ and $V_t$ represent the process noise and measurement noise, respectively. Another special case of the above model is the linear non-Gaussian system
\begin{align}
	X_{t+1}&=AX_t+BY_t+W_t\nonumber\\
	Z_t&=CX_t+V_t,\nonumber
\end{align}
where $V_t,W_t$ are non-Gaussian.

\subsection{Privacy-Aware State Estimation}
In this paper, we study the optimal design of an estimator for this system when the state estimates are shared with an untrusted adversary, as shown in Fig. \ref{Fig.EstInfSys}. Thus, the adversary may use the state estimates to infer the private process. In this subsection, we first present a motivating example that demonstrates that an adversary with access to the states of the system might be able to infer private information. We then present a framework for the optimal privacy-aware design of the state estimator. 
\subsubsection{Motivating Example}
Consider the evolution of CO$_2$ in a building, which can be modeled as 
\begin{align}\label{Eq.Dynamics-CO2}
X_{t+1}=aX_t+bY_t+W_t,
\end{align}
where  $X_t$ is the CO$_2$ level at time $t$, $Y_t$ is the occupancy, \emph{i.e.,} the number of people in the building, $W_t$ is the process noise, and $a,b\in\R$ are the system parameters. Also, let $Z_{t}=X_t+V_t$ denote the measurement of the CO$_2$ sensor at time $t$ where $V_t$ is the measurement noise. 

Occupancy of a building is highly private, and sensitive information, such as the mobility of the building occupants, can be inferred from the occupancy. However, the occupancy can be reliably inferred from the CO$_2$ measurement. To demonstrate this fact, we simulated the CO$_2$ level of a building using \eqref{Eq.Dynamics-CO2} with $a=0.75$,$b=0.2$. In our simulation,  $W_t$ and $V_t$ were modeled as independent and identically distributed zero mean Gaussian random variables with standard deviations $0.05$ and $0.1$, respectively. We then used a maximum likelihood estimator to infer the occupancy based on CO$_2$ measurements.

Figures  \ref{Fig.OccStateAdvA} and {\ref{Fig.OccStateAdvB}}  show a trajectory of CO$_2$ and its corresponding occupancy trajectory, respectively. In Fig. \ref{Fig.OccStateAdvC}, we plot the misdetection instances of the occupancy estimator when the CO$_2$ measurements in Fig. \ref{Fig.OccStateAdvA} are used to infer occupancy. In this figure, $\I{\hat{Y}_t\neq Y_t}$ is equal to one when the occupancy estimator makes a mistake and is zero otherwise. According to this figure, the maximum likelihood estimator can closely track the occupancy. Thus, sharing the CO$_2$ measurements with an untrusted party may result in loss of privacy for the building occupants.

\begin{figure}
\centering
  \begin{subfigure}[b]{.45\textwidth}
  \includegraphics[width=\linewidth,height=0.78\textwidth]{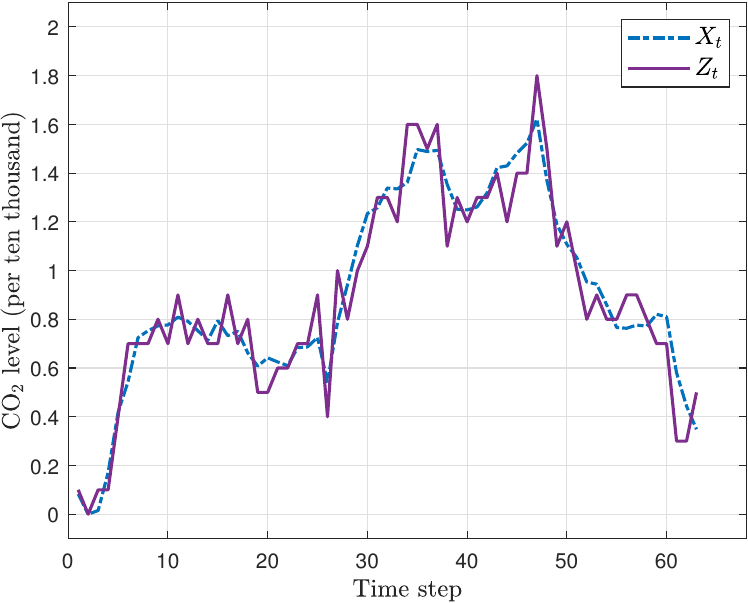}
    \caption{}
    \label{Fig.OccStateAdvA}
  \end{subfigure} \quad
  \begin{subfigure}[b]{.45\textwidth}
    \begin{subfigure}[b]{\textwidth}
      \includegraphics[width=\linewidth,height=0.35\textwidth]{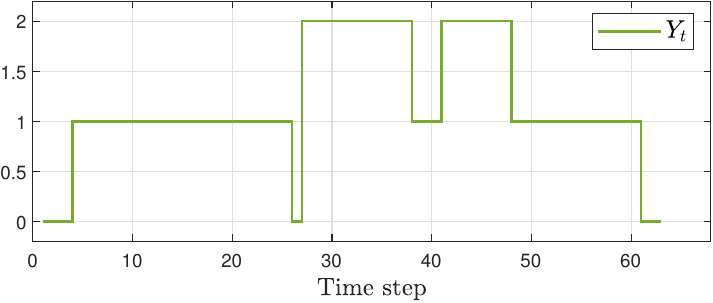}
      \caption{}
      \label{Fig.OccStateAdvB}
    \end{subfigure}
    \begin{subfigure}[b]{\textwidth}
      \includegraphics[width=\linewidth,height=0.35\textwidth]{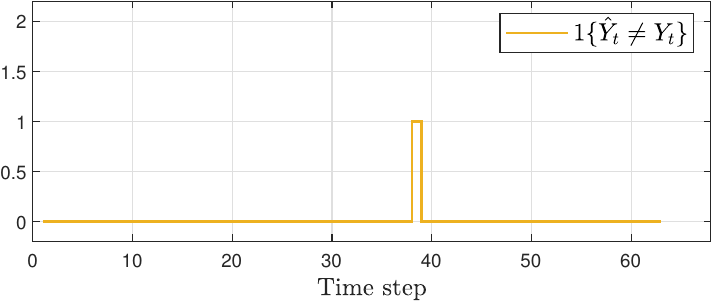}
      \caption{}
      \label{Fig.OccStateAdvC}
    \end{subfigure}
  \end{subfigure}
  \caption{Trajectories of CO$_2$ and sensor measurements ($a$), The occupancy trajectory ($b$), The misdetection instances of the occupancy estimator ($c$).}
  \label{Fig.OccStateAdv}
\end{figure}

\subsubsection{Optimal Design of Dynamic Estimators}
 We assume the estimator has access to the quantized version of sensor measurements. Let $B^{z}_{1},\dots,B^{z}_{N}$ denote a set of quantization cells, \emph{i.e.,}  $B^{z}_{i}$s are non-overlapping subsets of $\mathcal{Z}\subseteq\R^r$ with $\cup_iB^{z}_{i}=\mathcal{Z}$. Also, let $z_{\paren{i}}$ denote the cell center for $B^{z}_{i}$. We use $\tilde{Z}_t$ to denote the quantized version of $Z_t$, \emph{i.e.,} we have $\tilde{Z}_t=z_{(i)}$ when $Z_t$ is in $B^{z}_{i}$.

 We consider interval estimators where an estimator selects a region of $\mathcal{X}$ as its output. To this end, let  $B^{x}_{1},\dots,B^{x}_{M}$ denote a tessellation of $\mathcal{{X}}$, that is, $B^{x}_{i}$s are non-overlapping subsets of $\mathcal{{X}}\subseteq\R^n$ with $\cup_iB^{x}_{i}=\mathcal{{X}}$. We use ${x}_{(i)}$ to denote the cell center for $B^{x}_{i}$. 
 
Let $\hat{X}_t$ to denote the estimator's output at time $t$. A stochastic estimator randomly selects an element of $\left\{{x}_{(1)},\dots,{x}_{(M)}\right\}$ as its output based on the history of quantized measurements $\tilde{Z}^t=\left\{\tilde{Z}_0, \dots,\tilde{Z}_t\right\}$ and the history of its past decisions $\hat{X}^{t-1}=\left\{\hat{X}_{0},\dots,\hat{X}_{t-1}\right\}$. We use  $\pi_{t}\paren{\left. \hat{X_t}\right |\tilde{Z}^t,\hat{X}^{t-1}}$ to denote the estimation policy at time $t$, where $\pi_{t}\paren{\left. \cdot\right |\tilde{Z}^t,\hat{X}^{t-1}}$ is a  conditional probability distribution on $\left\{{x}_{(1)},\dots,{x}_{(M)}\right\}$, \emph{i.e.,} $\pi_{t}\paren{\left. {x}_{(i)}\right |\tilde{Z}^t,\hat{X}^{t-1}}$ is the probability that the estimator selects ${x}_{(i)}$ as its output.  We finally define a stochastic estimation policy over the horizon $t=0,\dots,T$ as the sequence of conditional probability distributions  $\left\{\pi_{t}\right\}^T_{t=0}$. 

 We use the mutual information between $\hat{X}^T=\left\{\hat{X}_0,\dots,\hat{X}_T\right\}$ and $Y^T=\left\{Y_0,\dots,Y_T\right\}$ as our privacy metric, which is defined as,
 \begin{equation}
     I\paren{\hat{X}^T;Y^T} = \sum_{\hat{x}^T,y^T}P\paren{\hat{x}^T,y^T} \log{\frac{P\paren{\hat{x}^T,y^T}}{P\paren{\hat{x}^T}P\paren{y^T}}}.
 \end{equation}

 Note that the mutual information captures the amount of common information between $\hat{X}^T$ and $Y^T$. When $ I \left(\hat{X}^{T};Y^{T}\right)$ is equal to zero, $\hat{X}^T$ and $Y^T$ will be statistically independent. However, the adversary can perfectly recover private information using the output of the estimator when $ I \left(\hat{X}^{T};Y^{T}\right)$  is equal to the entropy of $Y^{T}$ \cite{cover1999elements}. Thus, $ I \left(\hat{X}^{T};Y^{T}\right)$ captures the leakage level of private information through the estimator. Hence, a small value of $ I \left(\hat{X}^{T};Y^{T}\right)$ indicates a high privacy level.

The optimal privacy-aware dynamic stochastic estimation policy is the solution of the following optimization problem
\begin{equation}\label{Eq.OP}
\min_{\left\{\pi_{t}\right\}^T_{t=0}} L\left(\pi\right)= \min_{\left\{\pi_{t}\right\}^T_{t=0}} \sum_{t=0}^{T}\mathsf{E}\left[\textit{l}_{d}\left(X_t,\hat{X}_t\right)\right]+\lambda I \left(\hat{X}^{T};Y^{T}\right),
\end{equation}
where $\textit{l}_{d}\left(X_t,\hat{X}_t\right)$ denotes the estimation loss, $I\paren{\hat{X}^{T}; Y^{T}}$ is the leakage level of private information through the estimator, and $\lambda$ is a positive constant.

\section{Structural Properties of the Optimal Privacy-aware Estimator}\label{Sec:Stru}
In this section, first derive the Bellman optimality equation for the optimal privacy-aware estimation problem. We then study the structural properties of the optimal estimator.
\subsection{Optimality Equations}
In this subsection, we first define an auxiliary optimization problem, and  show it is equivalent to the optimization problem \eqref{Eq.OP}. Then, we derive optimality  equations for the equivalent optimization problem. Let $\mathcal{A}_t=\left\{{a_t\paren{\left. \hat{x}_t\right|\tilde{z}^{t}}}, \forall \hat{x}_t, \tilde{z}^t\right\}$ where $a_t\paren{\left. \hat{x}_t\right|\tilde{z}^{t}}$ is a conditional distribution of $\hat{X}_t$ given $\tilde{z}^t$. Consider an auxiliary decision-making problem where, given $\paren{\hat{X}^{t-1},\tilde{Z}^t}$, we first select $\mathcal{A}_t$ based on $\hat{X}^{t-1}$, and then generate $\hat{X}_t$ according to $a_t\paren{\hat{x}_t \left| \tilde{Z}^{t} \right.}\in \mathcal{A}_t$. The optimal choice of $\mathcal{A}_t$ is the solution of the following auxiliary optimization problem
\begin{equation}\label{Eq.OP2}
 	\min_{\left\{\mathcal{A}_t\right\}^T_{t=0}} L\left(\mathcal{A}\right)=\min_{\left\{\mathcal{A}_t\right\}^T_{t=0}} \sum_{t=0}^{T}\mathsf{E} \left[\textit{l}_{d}\left(X_t,\hat{X}_t\right)\right]+\lambda I \left(\hat{X}^{T};Y^{T}\right),
\end{equation}
where we implicitly allow  $\mathcal{A}_{t}$ to  depend on ${\hat{X}^{t-1}}$.

The following lemma shows that the auxiliary optimization problem \eqref{Eq.OP2}  and the original optimization problem are equivalent. 
\begin{lemma} \label{Lm.Equivalent}
	The original optimization problem \eqref{Eq.OP} is equivalent to the auxiliary optimization problem \eqref{Eq.OP2}.
\end{lemma}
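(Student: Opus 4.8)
The plan is to prove the equivalence by exhibiting an explicit correspondence between the feasible policies of the two problems under which corresponding policies induce the \emph{same} joint distribution over the trajectory $\paren{X^T,Y^T,\tilde{Z}^T,\hat{X}^T}$. Since $L\paren{\pi}$ and $L\paren{\mathcal{A}}$ are the very same expression and depend on the chosen policy only through this joint law (the estimation-loss terms $\mathsf{E}\sqparen{l_d\paren{X_t,\hat{X}_t}}$ and the mutual information $I\paren{\hat{X}^T;Y^T}$ are functionals of the joint distribution alone), it suffices to show that each feasible $\pi$ is matched by a feasible $\mathcal{A}$ inducing the same law, and conversely. This forces the achievable objective values of the two problems to coincide as sets, hence the two minima to agree.

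First I would set up the two constructions. Given a feasible policy $\brparen{\pi_t}^T_{t=0}$ of \eqref{Eq.OP}, for each time $t$ and each realization $\hat{x}^{t-1}$ of the past decisions I would define the collection $\mathcal{A}_t\paren{\hat{x}^{t-1}}=\brparen{a_t\paren{\hat{x}_t | \tilde{z}^t}:=\pi_t\paren{\hat{x}_t | \tilde{z}^t,\hat{x}^{t-1}},\ \forall \hat{x}_t,\tilde{z}^t}$. Each $a_t\paren{\cdot | \tilde{z}^t}$ is a valid conditional distribution on the cell centers because $\pi_t\paren{\cdot | \tilde{z}^t,\hat{x}^{t-1}}$ is, so this specifies a feasible auxiliary policy in which $\mathcal{A}_t$ is selected as a deterministic function of $\hat{x}^{t-1}$ and $\hat{X}_t$ is then drawn from $a_t\paren{\cdot | \tilde{z}^t}$. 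Conversely, given any feasible auxiliary policy, the composite rule — select $\mathcal{A}_t$ from $\hat{x}^{t-1}$, then draw $\hat{X}_t$ from $a_t\paren{\cdot | \tilde{z}^t}\in\mathcal{A}_t$ — induces, for each pair $\paren{\tilde{z}^t,\hat{x}^{t-1}}$, a conditional distribution of $\hat{X}_t$; defining $\pi_t\paren{\cdot | \tilde{z}^t,\hat{x}^{t-1}}$ to equal this distribution yields a feasible policy of \eqref{Eq.OP}. Restricted to deterministic selection of $\mathcal{A}_t$ these maps are mutually inverse, giving a bijection between the feasible sets.

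The crux is to verify that corresponding policies induce the same joint law of $\paren{X^T,Y^T,\tilde{Z}^T,\hat{X}^T}$, which I would do by unrolling the chain-rule factorization of the trajectory probability in both formulations. The system kernels $p\paren{X_{t+1} | X_t,Y_t}$, $P\paren{Y_{t+1} | Y_t}$, $p\paren{Z_t | X_t}$ and the deterministic quantization are common to both problems, so only the estimator's per-step contribution can differ. At time $t$ this contribution is the factor $\pi_t\paren{\hat{x}_t | \tilde{z}^t,\hat{x}^{t-1}}$ in \eqref{Eq.OP}, whereas in \eqref{Eq.OP2} it is the selection of $\mathcal{A}_t$ (a function of $\hat{x}^{t-1}$) followed by the draw $a_t\paren{\hat{x}_t | \tilde{z}^t}$; by construction these two factors are equal for every trajectory. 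Multiplying the matching per-step kernels shows the two joint laws coincide, and therefore $L\paren{\pi}=L\paren{\mathcal{A}}$ for every corresponding pair.

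Equality of objectives for every matched pair, together with the two directions of the correspondence, then forces the optimal values to agree and carries optimizers to optimizers, proving the claimed equivalence. The main obstacle I anticipate is confined to the bookkeeping of the joint-law computation: one must check carefully that selecting $\mathcal{A}_t$ ``based on $\hat{X}^{t-1}$'' introduces no extra randomness or dependence on the measurements beyond what $\pi_t$ already encodes — that is, that the apparent two-stage decision of \eqref{Eq.OP2} collapses, once conditioned on $\hat{x}^{t-1}$, to the single conditional kernel of \eqref{Eq.OP}. Even if the selection of $\mathcal{A}_t$ were allowed to be randomized, the resulting conditional of $\hat{X}_t$ given $\paren{\tilde{z}^t,\hat{x}^{t-1}}$ is a mixture of the $a_t\paren{\cdot | \tilde{z}^t}$ and hence still realizable as some $\pi_t$, so the argument is unaffected.
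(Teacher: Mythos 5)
Your proposal is correct and follows essentially the same route as the paper: the identical two-way correspondence $a_t\paren{\hat{x}_t|\tilde{z}^t}=\pi_t\paren{\hat{x}_t|\tilde{z}^t,\hat{x}^{t-1}}$ in both directions, with equality of objective values concluded for each matched pair. The only difference is that you justify the step the paper labels ``Clearly'' by explicitly checking that corresponding policies induce the same joint law of $\paren{X^T,Y^T,\tilde{Z}^T,\hat{X}^T}$, which is a welcome (and correct) elaboration rather than a different argument.
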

\begin{proof}
	See Appendix \ref{App:Lm.Equivalent}.
\end{proof}

The next theorem presents the optimality equations for the auxiliary decision-making problem. 

\begin{theorem} \label{Th.OPTEQU}
The Bellman optimality equation for the auxiliary optimization problem \eqref{Eq.OP2} is given by 
\begin{align} \label{Eq.VF}
V_{t}^{\star}\left( b_t \right) &=\min_{\mathcal{A} _t} \sum_{y^{t-1},\tilde{z}^t,\hat{x}_t}a_t\left( \left.\hat{x}_t \right|\tilde{z}^t \right) b_t\left( y^{t-1},\tilde{z}^t \right) \Big[ \int{p\left( \left. x_t\right|y^{t-1},\tilde{z}^t \right) \textit{l}_{d}\left( x_t,\hat{x}_t \right)dx_t}
\\&+\lambda \log \frac{\sum_{\tilde{z}^t}{a_t\left( \left.\hat{x}_t\right|\tilde{z}^t \right) b_t\left( y^{t-1},\tilde{z}^t \right)}}{\left( \sum_{y^{t-1},\tilde{z}^t}{a_t\left( \left.\hat{x}_t\right|\tilde{z}^t \right) b_t\left( y^{t-1},\tilde{z}^t \right)} \right) \left( \sum_{\tilde{z}^t}{b_t\left( y^{t-1},\tilde{z}^t \right)} \right)} \Big]+\mathsf{E}\left[ V_{t+1}^{\star}\left( b_{t+1}  \right) \left|b_t\right. \right] ,
\end{align} 
	where $b_t$ is the belief state defined as 	$b_t\left( y^{t-1},\tilde{z}^t\right)=P\left( y^{t-1},\tilde{z}^t\left|\hat{X}^{t-1} \right. \right)$ 
	 and $V_{t}^{\star}\left( b_t \right)$ is the optimal value function associated with $b_t$ with $ V_{T+1}^{\star}\left(\cdot\right)=0$. Moreover, the belief state $b_{t+1}$ can be recursively computed using $b_t$ and the following forward equation 	
	\begin{align}\label{Eq.BSUP}
	b_{t+1}\left( y^t,\tilde{z}^{t+1} \right) &=\frac{a _t\left(\left. \hat{X}_t\right|\tilde{z}^t\right) b_t\left( y^{t-1},\tilde{z}^t \right) P\left( \left. y_t\right|y_{t-1} \right) \int{P\left( \left. \tilde{z}_{t+1}\right|x_{t+1} \right) p\left( \left. x_{t+1}\right|y^t,\tilde{z}^t \right) dx_{t+1}}}{\sum_{y^{t-1},\tilde{z}^t}{a _t\left(\left. \hat{X}_t\right|\tilde{z}^t\right) b_t\left( y^{t-1},\tilde{z}^t \right)}}
	\end{align}  
with $b_0=P_{\tilde{z}_0}$ where $P_{\tilde{z}_0}$ is the probability mass function of $\tilde{Z}_0$.
\end{theorem}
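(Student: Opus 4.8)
The plan is to recast the auxiliary problem \eqref{Eq.OP2} as a fully observed Markov decision process whose state is the belief $b_t$ and whose action is $\mathcal{A}_t$, and then to invoke the standard dynamic programming argument. Two things must be established for this to work: (i) the total objective splits into a sum of per-stage costs, each a function of $(b_t,\mathcal{A}_t)$ alone; and (ii) the belief $b_t$ is an information state, i.e., it evolves as a controlled Markov chain driven by the realized output $\hat{X}_t$, with a transition law depending only on $(b_t,\mathcal{A}_t)$. Once (i) and (ii) hold, backward induction with terminal value $V_{T+1}^\star=0$ yields \eqref{Eq.VF} and \eqref{Eq.BSUP}.

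For (i), the estimation-loss part is already additive, so the work is in the mutual-information term. First I would apply the chain rule $I(\hat{X}^T;Y^T)=\sum_{t=0}^T I(\hat{X}_t;Y^T\mid \hat{X}^{t-1})$. The key reduction is that each summand collapses to $I(\hat{X}_t;Y^{t-1}\mid \hat{X}^{t-1})$. This rests on the causal structure of the model: since $\tilde{Z}^t$ is a noisy, quantized function of $X^t$ and $X^t$ depends on the private process only through $Y^{t-1}$, one has $\tilde{Z}^t \perp (Y_t,\dots,Y_T)\mid Y^{t-1}$; as $\hat{X}^{t-1}$ and $\hat{X}_t$ are generated from $\tilde{Z}^t$ together with independent randomization, it follows that $\hat{X}_t\perp(Y_t,\dots,Y_T)\mid (Y^{t-1},\hat{X}^{t-1})$, which kills the $Y_t^T$ contribution in the chain-rule expansion of $I(\hat{X}_t;Y^T\mid\hat{X}^{t-1})$. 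Writing $I(\hat{X}_t;Y^{t-1}\mid\hat{X}^{t-1})$ as an expectation over $\hat{X}^{t-1}$ of a pointwise term, and using that $\hat{X}_t\perp Y^{t-1}\mid(\tilde{Z}^t,\hat{X}^{t-1})$ so that $a_t(\hat{x}_t\mid\tilde{z}^t)\,b_t(y^{t-1},\tilde{z}^t)=P(\hat{x}_t,y^{t-1},\tilde{z}^t\mid \hat{X}^{t-1})$, I would marginalize over $\tilde{z}^t$ to identify the numerator and the two denominator factors in \eqref{Eq.VF} with $P(\hat{x}_t,y^{t-1}\mid\hat{X}^{t-1})$, $P(\hat{x}_t\mid\hat{X}^{t-1})$, and $P(y^{t-1}\mid\hat{X}^{t-1})$. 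The same identity, together with $p(x_t\mid y^{t-1},\tilde{z}^t,\hat{X}^{t-1})=p(x_t\mid y^{t-1},\tilde{z}^t)$, turns $\mathsf{E}[l_d(X_t,\hat{X}_t)\mid\hat{X}^{t-1}]$ into the first bracketed term, so the per-stage cost matches the bracket in \eqref{Eq.VF} exactly.

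For (ii), I would compute $b_{t+1}(y^t,\tilde{z}^{t+1})=P(y^t,\tilde{z}^{t+1}\mid\hat{X}^t)$ by Bayes' rule conditioned on the realized $\hat{X}_t$. Factoring $P(y^{t-1},y_t,\tilde{z}^t,\tilde{z}_{t+1},\hat{X}_t\mid\hat{X}^{t-1})$ and applying (a) $P(y^{t-1},\tilde{z}^t\mid\hat{X}^{t-1})=b_t$, (b) the policy identity $P(\hat{X}_t\mid y^{t-1},\tilde{z}^t,\hat{X}^{t-1})=a_t(\hat{X}_t\mid\tilde{z}^t)$, (c) the Markov step $P(y_t\mid y^{t-1},\tilde{z}^t,\hat{X}_t,\hat{X}^{t-1})=P(y_t\mid y_{t-1})$ (again via $\tilde{Z}^t\perp Y_t\mid Y^{t-1}$), and (d) the measurement update $P(\tilde{z}_{t+1}\mid y^t,\tilde{z}^t,\cdots)=\int P(\tilde{z}_{t+1}\mid x_{t+1})\,p(x_{t+1}\mid y^t,\tilde{z}^t)\,dx_{t+1}$ yields exactly the numerator of \eqref{Eq.BSUP}; normalizing by $P(\hat{X}_t\mid\hat{X}^{t-1})=\sum_{y^{t-1},\tilde{z}^t}a_t(\hat{X}_t\mid\tilde{z}^t)b_t(y^{t-1},\tilde{z}^t)$ gives the stated denominator, and a short check confirms that $b_{t+1}$ sums to one. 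This makes $b_{t+1}$ a deterministic function of $(b_t,\mathcal{A}_t,\hat{X}_t)$ and shows that the distribution of $\hat{X}_t$ given the past depends only on $(b_t,\mathcal{A}_t)$.

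With the additive cost and the controlled-Markov belief recursion in hand, the Bellman equation \eqref{Eq.VF} follows by the usual backward-induction argument, the expectation $\mathsf{E}[V_{t+1}^\star(b_{t+1})\mid b_t]$ being taken over $\hat{X}_t\sim P(\cdot\mid b_t,\mathcal{A}_t)$. I expect the main obstacle to be the chain of conditional-independence statements underlying the mutual-information reduction in (i) and the Markov step (c) in (ii): each hinges on the precise claim that the estimator outputs and the quantized measurements are functions of $X^t$, hence of $Y^{t-1}$ and exogenous noise, and therefore carry no information about the future private states beyond what $Y^{t-1}$ already contains. Making these independences rigorous from the stochastic kernels in \eqref{Eq.Dynamics-state} is the crux; the remaining algebraic identifications are routine marginalizations.
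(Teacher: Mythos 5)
Your proposal follows essentially the same route as the paper: the chain-rule reduction $I(\hat{X}^T;Y^T)=\sum_t I(\hat{X}_t;Y^{t-1}\mid\hat{X}^{t-1})$ via the Markov chain $\hat{X}_t\to Y^{t-1}\to Y_t^T$ (the paper's Lemma \ref{Lm.MISimp}), the identification of the per-stage cost and the Bayes-rule belief recursion in terms of $(b_t,\mathcal{A}_t)$ (the paper's Lemma \ref{Lm.BeliefStateLoss}), and backward induction showing $V_t^\star$ depends only on $b_t$. The plan is correct and the key conditional-independence claims you flag as the crux are exactly the ones the paper invokes.
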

\begin{proof}
	See Appendix \ref{App:Th.OPTEQU}.
\end{proof}
According to the Theorem \ref{Th.OPTEQU}, the Bellman optimality equation for the auxiliary decision-making problem depends on the belief state $b_t$. Thus, the optimal solution of  \eqref{Eq.VF} is a function of $b_t$, \emph{i.e.,} $\mathcal{A}^{\star}_t\left(b_t\right)$. Therefore, to compute the optimal policy of \eqref{Eq.OP2}, one needs to first compute $b_t$, and then select $\mathcal{A}^{\star}_t\left(b_t\right)$ according to the Bellman optimality equation in \eqref{Eq.VF}.

 The next theorem studies the structure of the optimal policy of the original estimation problem  \eqref{Eq.OP}.
\begin{theorem}\label{Theo: EstPol}
	Let $\mathcal{A}^{\star}_t\left(b_t\right)$ denote the solution \eqref{Eq.VF}. Then, given $\left(\hat{X}^{t-1},\tilde{Z}^t\right)$, the optimal estimation policy at time $t$ is $\pi_t^{\star}\left(\hat{x}_t\left|\tilde{Z}^t,b_t\right.\right)=a_t^{\star}\left(\hat{x}_t\left|\tilde{Z}^t\right.\right)\in\mathcal{A}^{\star}_t\paren{b_t}$, where $b_t$ is the belief state associated with $\hat{X}^{t-1}$. 
\end{theorem}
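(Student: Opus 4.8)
The plan is to establish the equivalence between the auxiliary problem \eqref{Eq.OP2} and the original problem \eqref{Eq.OP} (via Lemma \ref{Lm.Equivalent}) and then translate the optimal solution of the auxiliary problem back into a policy for the original problem. The key observation is that in the auxiliary problem, at each time $t$ the decision maker selects the entire collection $\mathcal{A}_t=\left\{a_t\paren{\left.\hat{x}_t\right|\tilde{z}^t},\forall\hat{x}_t,\tilde{z}^t\right\}$ as a function of $\hat{X}^{t-1}$ only, and then draws $\hat{X}_t$ according to the particular component $a_t\paren{\left.\hat{x}_t\right|\tilde{Z}^t}$ indexed by the realized quantized-measurement history $\tilde{Z}^t$. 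By Theorem \ref{Th.OPTEQU}, the optimal $\mathcal{A}_t^{\star}$ is a function of the belief state $b_t$ alone, and $b_t$ is itself a deterministic function of $\hat{X}^{t-1}$ through the forward recursion \eqref{Eq.BSUP} initialized at $b_0=P_{\tilde{z}_0}$.

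First I would make explicit that the belief state $b_t\left(y^{t-1},\tilde{z}^t\right)=P\left(y^{t-1},\tilde{z}^t\left|\hat{X}^{t-1}\right.\right)$ is measurable with respect to $\hat{X}^{t-1}$, so that conditioning on $b_t$ and conditioning on $\hat{X}^{t-1}$ carry the same information for the purpose of selecting $\mathcal{A}_t$; this is guaranteed by the forward filtering equation \eqref{Eq.BSUP}, which computes $b_{t+1}$ recursively from $b_t$ and $\hat{X}_t$. Next I would invoke Lemma \ref{Lm.Equivalent} to assert that any optimal solution of the auxiliary problem \eqref{Eq.OP2} attains the same optimal value as \eqref{Eq.OP}, and that the two problems have interchangeable feasible policies in the sense made precise in the proof of that lemma. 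The crux is then to read off the original policy $\pi_t^{\star}$ from the auxiliary optimizer: since the original estimator is allowed to depend on $\left(\tilde{Z}^t,\hat{X}^{t-1}\right)$, and the auxiliary optimizer prescribes $\mathcal{A}_t^{\star}\paren{b_t}$ as a function of $\hat{X}^{t-1}$ (through $b_t$) followed by selection of the $\tilde{Z}^t$-indexed component $a_t^{\star}\paren{\left.\hat{x}_t\right|\tilde{Z}^t}$, the composite map $\left(\tilde{Z}^t,\hat{X}^{t-1}\right)\mapsto a_t^{\star}\paren{\left.\hat{x}_t\right|\tilde{Z}^t}\in\mathcal{A}_t^{\star}\paren{b_t}$ is a valid conditional distribution on $\left\{x_{(1)},\dots,x_{(M)}\right\}$ and hence a legitimate estimation policy of the required form.

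To close the argument I would verify that this induced policy reproduces the same joint distribution over $\left(X^T,\hat{X}^T,Y^T\right)$ as the auxiliary optimizer, so that both the expected estimation loss term $\sum_t\mathsf{E}\left[l_d\paren{X_t,\hat{X}_t}\right]$ and the mutual-information term $\lambda I\paren{\hat{X}^T;Y^T}$ coincide; this follows because the forward recursion \eqref{Eq.BSUP} and the choice of $a_t^{\star}$ jointly determine the law of $\hat{X}_t$ given the past in exactly the same way under both formulations. Consequently the value $L\paren{\pi^{\star}}$ equals $L\paren{\mathcal{A}^{\star}}$, which is the optimum of \eqref{Eq.OP} by Lemma \ref{Lm.Equivalent}, establishing optimality of $\pi_t^{\star}$.

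I expect the main obstacle to be the measurability and well-definedness check that $b_t$ is indeed a deterministic function of $\hat{X}^{t-1}$ so that $\mathcal{A}_t^{\star}\paren{b_t}$ can be unambiguously selected, and in particular the bookkeeping that the $\tilde{Z}^t$-indexed selection of a component of $\mathcal{A}_t^{\star}\paren{b_t}$ yields precisely a conditional distribution of the form $\pi_t^{\star}\paren{\left.\hat{x}_t\right|\tilde{Z}^t,b_t}$ without introducing any spurious dependence on the full measurement history beyond what is summarized in $b_t$ together with the current $\tilde{Z}^t$. The rest is a direct consequence of Theorem \ref{Th.OPTEQU} and Lemma \ref{Lm.Equivalent}.
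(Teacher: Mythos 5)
Your proposal is correct and follows essentially the same route as the paper: the paper's proof of this theorem is simply the observation that it "follows directly from Lemma \ref{Lm.Equivalent} and Theorem \ref{Th.OPTEQU}", and your argument is a faithful (and more explicit) unpacking of exactly those two ingredients --- the policy-translation construction from the proof of Lemma \ref{Lm.Equivalent} combined with the fact from Theorem \ref{Th.OPTEQU} that $\mathcal{A}_t^{\star}$ depends on $\hat{X}^{t-1}$ only through the belief state $b_t$. The additional measurability and distribution-matching checks you flag are sound and fill in details the paper leaves implicit.
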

\begin{proof}
The proof follows directly from Lemma \ref{Lm.Equivalent} and Theorem \ref{Th.OPTEQU}.
\end{proof}
According to Theorem \ref{Theo: EstPol}, the optimal estimation policy is obtained in two steps, where we first use the history of the estimator outputs to build the belief state $b_t$, and then we optimize the policy collection in \eqref{Eq.VF} to find the optimal estimation policy. 

\subsection{Structural Properties of the Optimal Estimation Policy}

The optimal privacy-aware estimation is a closed-loop control problem wherein the estimator controls the adversary's belief about the private process. To highlight this point, note that the belief state $b_t$ is recursively updated using the estimator output and the estimation policy according to the update law $b_{t+1}=\varPhi \left( b_t,\mathcal{A}_t^\star,\hat{X}_t \right)$, as shown in Fig. \ref{Fig.EstStruA}. Thus, the estimator acts as a feedback controller which controls the evolution of the belief state. Also, note that an adversary with access to the outputs of the estimator will form the posterior probability $P\left(y^{t}\left. \right| \hat{X}^t\right)$ to infer the private variable, which can be expanded as 
\begin{equation} \label{Eq.ADVINF}
	P\left(y^{t}\left. \right| \hat{X}^t\right) = \sum_{\tilde{z}^{t+1}} b_{t+1}\paren{y^{t},\tilde{z}^{t+1}}.\nonumber
\end{equation}
Hence, the optimal privacy-aware estimator is effectively a controller that controls the adversary's belief about the private process.

At this point, it is constructive to compare the optimal privacy-aware estimation with the classical minimum mean square error (MMSE) estimator of the states of the system based on the measurements. The MMSE estimator is computed using  $\hat{X_t}=\mathsf{E}\left[X_t\left|\tilde{Z}^t\right.\right]$, as shown in Fig. \ref{Fig.EstStruB}. Thus, different from the optimal privacy-aware estimator, the classical MMSE estimation is not a closed-loop control problem.

\begin{figure}[!htbp]
	\centering
	\begin{subfigure}[b]{.45\textwidth}
        \hspace{0.8cm}
		\includegraphics[scale=0.2]{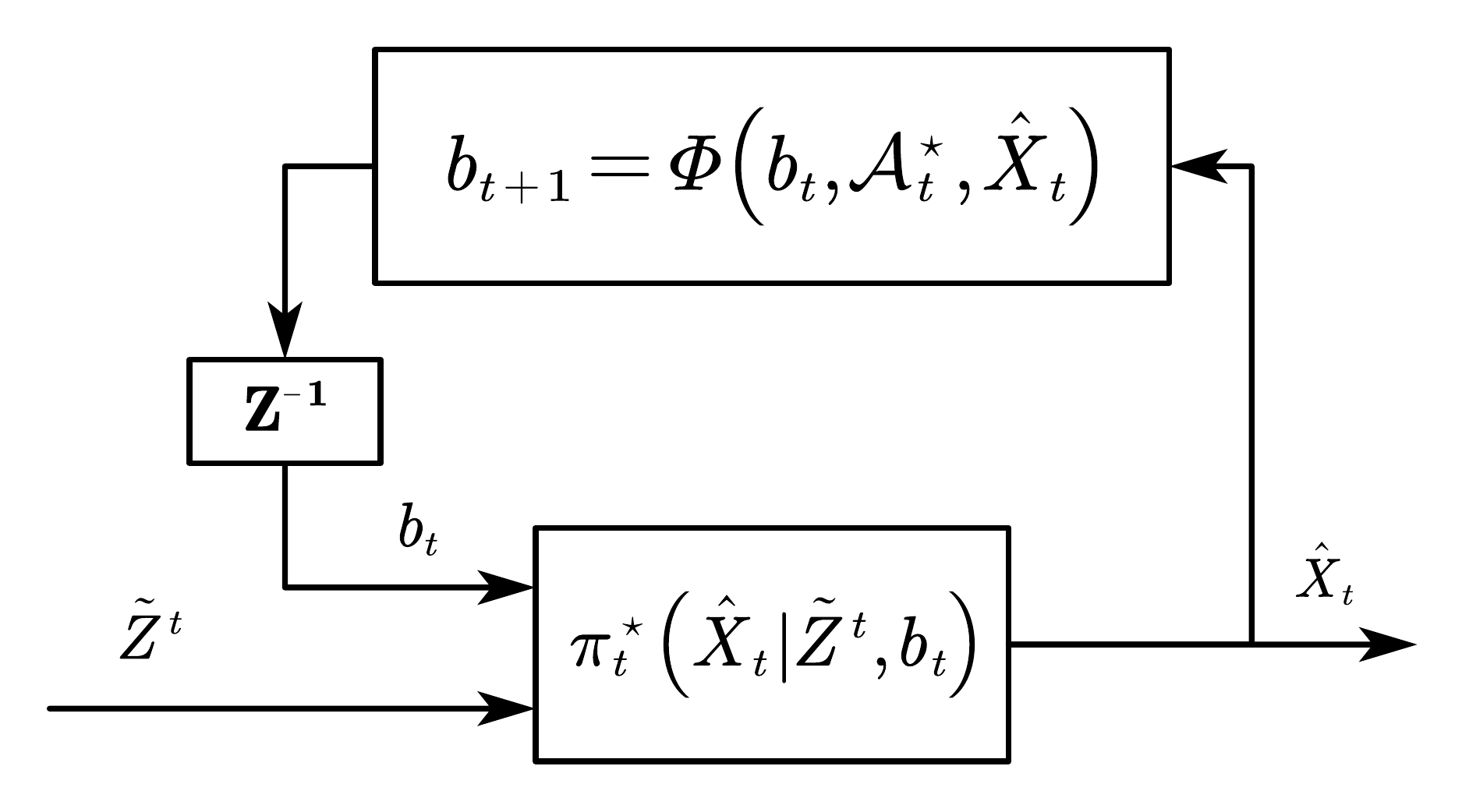}
            \caption{}
		\label{Fig.EstStruA}
        \end{subfigure}
        \begin{subfigure}[b]{.45\textwidth}
        \hspace{1.6cm}
		\includegraphics[scale=0.2]{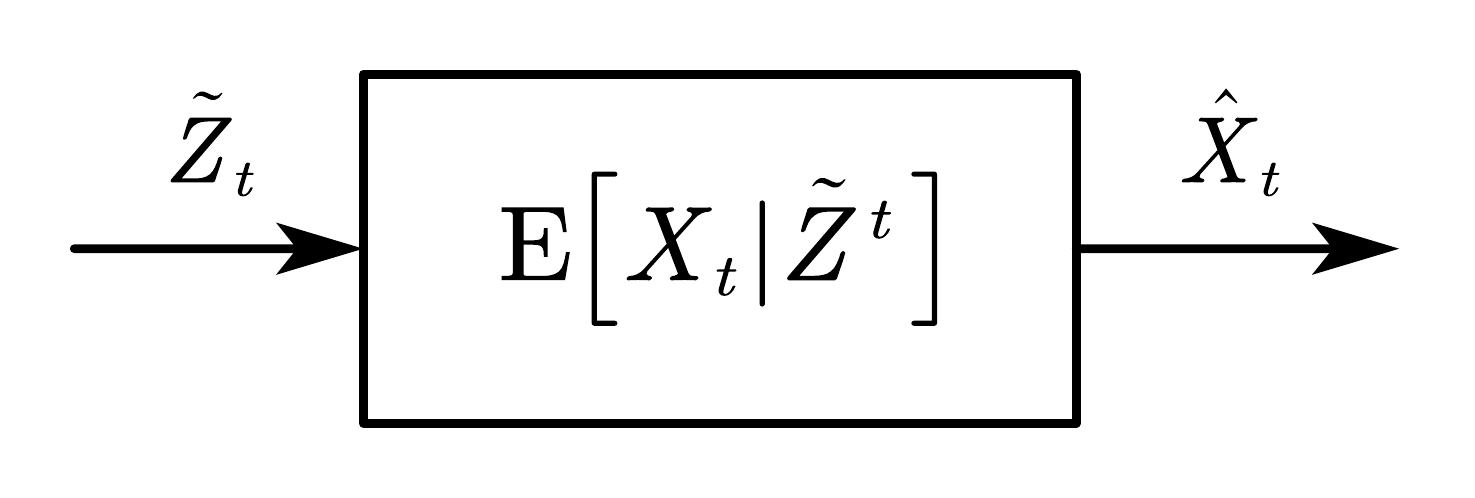}
            \caption{}
		\label{Fig.EstStruB}
        \end{subfigure}
	\caption{The structure of the optimal privacy-aware estimator $(a)$, and the MMSE estimator $(b)$.
	}
\end{figure}

Finally, we note that the dynamic programming solution in \eqref{Eq.OP2} decomposes the optimization problem in \eqref{Eq.OP} into sub-problems only according to the belief state $b_t$, rather than $b_t$ and $\tilde{Z}^t$. Thus, the optimal value function only depends on $b_t$. Although the sensor measurements are not used for the dynamic programming decomposition, the optimal estimation policy depends on both sensor measurements and the belief state. That is, we first choose an optimal policy collection $\mathcal{A}^\star_t=\left\{{a_t\paren{\left. \hat{x}_t\right|\tilde{z}^{t}}}, \forall \hat{x}_t, \tilde{z}^t\right\}$ based on $b_t$, and then choose a distribution from $\mathcal{A}^\star_t$ based on the history of sensor measurements.

\section{A Computational Algorithm For Estimator Design}\label{Sec:Comp}
The computation of the optimal estimation policy using Theorem \ref{Th.OPTEQU} is difficult when the time-horizon is large due to two main reasons. First, the dimension of the belief state increases with time which enlarges the search space of the optimal policy. Second, the computation of the logarithmic term in \eqref{Eq.VF} becomes prohibitive when the horizon becomes large as it involves summation over all possible combinations of the past and current quantized measurements. 

In this section, we develop a numerical algorithm for computing an optimal privacy-aware estimator. To this end, let $\pi_\theta \paren{\cdot|h_t}$ denote a stationary stochastic policy, parameterized by $\theta$, where $\pi_\theta \paren{\cdot|h_t}$ is a probability distribution on  $\left\{{x}_{(1)},\dots,{x}_{(M)}\right\}$, and $h_t$ is the available information to the estimator. Here, at each time-step $t$, $\hat{X}_t$ is generated according to the distribution $\pi_\theta \paren{\cdot|h_t}$. This formulation allows us to model two important special cases: $(i)$ a finite memory estimator where the output of the estimator depends on the current measurements and the past $d$ measurements and the past $d$ outputs of the estimator, \emph{i.e.,} $h_t=\left\{\tilde{Z}_{t-d}^{t},\hat{X}_{t-d}^{t-1}\right\}$, $(ii)$ a state-based estimator where the output of the estimator dependents on a state which summarizes the past measurements and the past outputs of the estimator, \emph{i.e.,} $h_t=M_t$ where $M_t=f\left(M_{t-1},\tilde{Z}_t,\hat{X}_{t-1}\right)$.

The optimal estimation policy is the solution of the following optimization problem  
	\begin{equation}\label{Eq:Opt-PG}
	\min_{\pi_\theta \paren{\cdot|\cdot}} \sum_{t=0}^T\mathsf{E}\left[\textit{l}_{d}\left(X_t,\hat{X}_t\right)\right]+\lambda I_\theta \left(\hat{X}^{T};Y^{T}\right).
	\end{equation}
	
The next theorem derives an expression for the gradient of the objective function function in \eqref{Eq:Opt-PG} with respect to the policy parameter $\theta$.
\begin{theorem} \label{Th.PG}
	Let $L\paren{\theta}$ denote the objective function in \eqref{Eq:Opt-PG} when the policy parameter is $\theta$. The gradient of $L\paren{\theta}$ with respect to $\theta$ can be expressed as 
	\begin{equation}\label{Eq: Grad}
		\nabla _{\theta}L\paren{\theta} =\mathsf{E}\left[\paren{\sum_{t=0}^T{l_t}\paren{X_t,Y^{t},\hat{X}_t;\theta}} \paren{\sum_{t=0}^{T}{\nabla _{\theta} \log \pi_\theta\paren{\left. \hat{X}^{t} \right| h_{t}}}} \right],
	\end{equation}
	where,
	\begin{equation} \label{Eq.Loss}
		{l_t}\left( X_t,Y^{t},\hat{X}^t;\theta \right) = \textit{l}_{d}\left( X_t,\hat{X}_t \right)+ \log{\frac{P_{\theta}\left( \left. \hat{X}_t\right|\hat{X}^{t-1},Y^{t} \right)}{P_{\theta}\left( \left. \hat{X}_t\right|\hat{X}^{t-1} \right)}}.
	\end{equation}
\end{theorem}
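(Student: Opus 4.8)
The plan is to compute $\nabla_\theta L(\theta)$ by writing the objective as an expectation over a probability distribution that depends on $\theta$, and then applying the log-derivative (REINFORCE) trick. First I would identify the single source of $\theta$-dependence in the joint law of all the random quantities. Because the state dynamics $p(x_{t+1}\mid x_t,y_t)$, the private kernel $P(y_{t+1}\mid y_t)$, and the measurement kernel $p(z_t\mid x_t)$ are all fixed, the only $\theta$-dependent factor in the joint density of $(X^T,Y^T,\tilde Z^T,\hat X^T)$ is the product of policy factors $\prod_{t=0}^T \pi_\theta(\hat X_t\mid h_t)$. I will denote the joint law by $P_\theta$ and observe that $\nabla_\theta \log P_\theta = \sum_{t=0}^T \nabla_\theta \log \pi_\theta(\hat X_t\mid h_t)$, since every non-policy factor is independent of $\theta$ and drops out under the logarithm.

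Next I would rewrite the objective so that $\theta$ appears both inside an expectation and inside the integrand. The estimation-loss term $\sum_t \mathsf{E}[\,l_d(X_t,\hat X_t)\,]$ is an expectation of a $\theta$-free integrand, so its gradient is handled purely by the score function. For the privacy term, I would expand the mutual information using the chain rule as a telescoping sum over per-step conditional log-likelihood ratios,
\begin{equation}\nonumber
I_\theta(\hat X^T;Y^T)=\sum_{t=0}^T \mathsf{E}\!\left[\log\frac{P_\theta(\hat X_t\mid \hat X^{t-1},Y^t)}{P_\theta(\hat X_t\mid \hat X^{t-1})}\right],
\end{equation}
which is exactly what motivates the definition of $l_t$ in \eqref{Eq.Loss}. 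This identity follows from $I_\theta(\hat X^T;Y^T)=H(\hat X^T)-H(\hat X^T\mid Y^T)$ and the chain rule for entropy, using that $Y^T$ is available causally so the conditioning sets nest correctly. Combining the two terms, $L(\theta)=\mathsf{E}\big[\sum_{t=0}^T l_t(X_t,Y^t,\hat X_t;\theta)\big]$, where the integrand now carries its own explicit $\theta$-dependence through the log-ratio.

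I would then differentiate, splitting $\nabla_\theta$ into the part acting on the sampling distribution $P_\theta$ and the part acting on the integrand. The first part yields the score-function factor $\big(\sum_t l_t\big)\big(\sum_t \nabla_\theta\log\pi_\theta\big)$, which is precisely the claimed expression \eqref{Eq: Grad}. The second part is $\mathsf{E}\big[\sum_t \nabla_\theta l_t\big]$, and the main obstacle is showing this contribution vanishes. The key observation is that $\nabla_\theta l_t$ only involves the gradients of the log-densities $\log P_\theta(\hat X_t\mid\hat X^{t-1},Y^t)$ and $\log P_\theta(\hat X_t\mid\hat X^{t-1})$; each such term, once we take expectation over $\hat X_t$ under the corresponding marginal, contributes $\mathsf{E}[\nabla_\theta\log P_\theta(\hat X_t\mid\cdot)]=\nabla_\theta \mathsf{E}[1]=0$ by the standard identity $\sum_{\hat x}\nabla_\theta P_\theta(\hat x\mid\cdot)=\nabla_\theta\sum_{\hat x}P_\theta(\hat x\mid\cdot)=\nabla_\theta 1=0$. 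Carefully pairing the numerator and denominator terms at matching conditioning levels and summing over $t$ shows the explicit-derivative contribution telescopes to zero, leaving only the score-function term.

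The delicate point to verify rigorously is the interchange of differentiation and summation/integration (justified by the finiteness of $\mathcal Y$, $\{x_{(i)}\}$, and the quantized measurement alphabet, together with a dominated-convergence or compact-parameter argument), and the bookkeeping that the induced marginals $P_\theta(\hat X_t\mid\hat X^{t-1},Y^t)$ and $P_\theta(\hat X_t\mid\hat X^{t-1})$ are the correct ones under which the zero-expectation identity applies. I expect the vanishing of the $\nabla_\theta l_t$ term to be the crux, since it requires expressing the mutual-information gradient consistently and recognizing the cancellation between the numerator and denominator score terms rather than treating them independently.
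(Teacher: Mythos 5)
Your proposal follows essentially the same route as the paper's proof: decompose $I_\theta(\hat{X}^T;Y^T)$ via the chain rule (plus the Markov property of the private process) into per-step conditional log-likelihood ratios, apply the log-derivative trick noting that only the policy factors of the trajectory law depend on $\theta$, and kill the explicit $\nabla_\theta l_t$ contribution via the score-function identity $\sum_{\hat{x}_t}\nabla_\theta P_\theta(\hat{x}_t\mid\cdot)=0$ applied to the numerator and denominator terms (each of which vanishes individually, so no telescoping is actually needed). This matches the paper's argument in all essentials.
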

\begin{proof}
	See Appendix \ref{App: Th.PG}.
\end{proof}
Theorem \ref{Th.PG} allows us to numerically compute the solution of the optimization problem \eqref{Eq:Opt-PG} using the gradient descent algorithm where the expectation in \eqref{Eq: Grad} is numerically computed for a given value of $\theta$ and then $\theta$ is updated using the gradient value. Alternatively, one can develop stochastic gradient descent algorithms for solving \eqref{Eq:Opt-PG} based on  \eqref{Eq: Grad}. 
\begin{remark}
	We note that the standard policy gradient theorem from the reinforcement learning literature, e.g., see \cite{bertsekas2019reinforcement}, is not directly applicable to the optimization \eqref{Eq:Opt-PG} as the objective function in \eqref{Eq:Opt-PG} is nonlinear in the policy. Thus, in Theorem \ref{Th.PG}, we develop an expression for the gradient of the objective function with respect to the policy parameter.	
\end{remark}
\begin{remark}
    The optimal policy can be computed using the actor-critic methods or the temporal difference learning method, \emph{e.g.,} see \cite{bertsekas2019reinforcement}.
\end{remark}

The term ${l_t}\left( X_t,Y^{t},\hat{X}^t;\theta \right)$ in \eqref{Eq.Loss} consists of two parts: $(i)$ $\textit{l}_{d}\left( X_t,\hat{X}_t \right)$ which captures the estimation loss, and $(ii)$ $\log{\frac{P_{\theta}\left( \left. \hat{X}_t\right|\hat{X}^{t-1},Y^{t} \right)}{P_{\theta}\left( \left. \hat{X}_t\right|\hat{X}^{t-1} \right)}}$ which captures the leakage of private information. We refer to the second term on the right-hand side of \eqref{Eq.Loss} as the information loss. The direct computation of the information loss is challenging as it involves the summations over all possible values of current and past quantized sensor measurements. To highlight this fact, note that the information loss can be written as 
\begin{equation} \label{Eq.DirComp}
	\log{\frac{P_{\theta}\left( \left. \hat{X}_t\right|\hat{X}^{t-1},Y^{t} \right)}{P_{\theta}\left( \left. \hat{X}_t\right|\hat{X}^{t-1} \right)}} = \log{\frac{\sum_{\tilde{z}^t}{\pi_\theta\left( \left.\hat{X}_t\right|\tilde{z}^t, \hat{X}^{t-1} \right) b_t\left( Y^{t-1},\tilde{z}^t \right)}}{\left( \sum_{y^{t-1},\tilde{z}^t}{\pi_\theta\left( \left.\hat{X}_t\right|\tilde{z}^t, \hat{X}^{t-1} \right) b_t\left( y^{t-1},\tilde{z}^t \right)} \right) \left( \sum_{\tilde{z}^t}{b_t\left( Y^{t-1},\tilde{z}^t \right)} \right)} }.
\end{equation}
Both the numerator and denominator in \eqref{Eq.DirComp} involve summations over all possible values of $\tilde{z}^t$ which implies that the complexity of direct computation of the information loss increases exponentially with time.

The next theorem develops a variational representation for the information loss. Under the variational approach, a divergence or a likelihood function is expressed based on the solution of a functional optimization problem \cite{nguyen2010estimating, poole2018variational}.
\begin{theorem} \label{Th.ILA}
	The information loss can be written as,
	\begin{equation} \label{Eq.ILA}
		\log \frac{P_\theta\left( \hat{x}_t|\hat{x}^{t-1},y^t \right)}{P_\theta\left( \hat{x}_t|\hat{x}^{t-1} \right)}=g_t^\star\left( \hat{x}_t, \hat{x}^{t-1},y^t \right) -f_t^\star\left( \hat{x}_t, \hat{x}^{t-1} \right), \quad t=0,\dots,T       
	\end{equation}
	where $\left\{g_t^\star\paren{\cdot}\right\}_{t=0}^T$ and $\left\{f_t^\star\paren{\cdot}\right\}_{t=0}^T$ are optimal solutions of the following two optimization problems,
	\begin{equation} \label{Eq.ILAG}
		\mathop {\mathrm{sup}} \limits_{\{g_t\paren{\cdot}\}_{t=0}^T}\mathsf{E}\left[ \sum_{t=0}^T{g_{t}\left( \hat{X}_t, \hat{X}^{t-1},Y^t \right) - e^{g_{t}\left( \tilde{X},\hat{X}^{t-1},Y^t \right)-1}} \right],
	\end{equation}
	\begin{equation} \label{Eq.ILAF}
		\mathop {\mathrm{sup}} \limits_{\{f_t\paren{\cdot}\}_{t=0}^T}\mathsf{E}\left[ \sum_{t=0}^T{f_{t}\left( \hat{X}_t, \hat{X}^{t-1} \right) - e^{f_{t}\left( \tilde{X},\hat{X}^{t-1} \right)-1} } \right] ,
	\end{equation}
	where $g_t\paren{\cdot}$ and $f_t\paren{\cdot}$ belong to the space of measurable functions, $\tilde{X}$ is a discrete uniform variable which has the same support set as $\hat{X}_t$ and is independent of $\hat{X}^T$ and $Y^{T}$.
\end{theorem}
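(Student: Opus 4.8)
The plan is to recognize that Theorem \ref{Th.ILA} is really two instances of a single variational identity applied to two different conditional distributions, and that each identity follows from the Donsker--Varadhan / Gibbs variational characterization of the logarithm (or equivalently from the fact that the functional $g \mapsto \ES{g - e^{g-1}}$ is maximized by setting $g$ equal to a log-density-ratio). First I would isolate the optimization in \eqref{Eq.ILAG} and show its pointwise optimizer recovers $g_t^\star = \log \frac{P_\theta(\hat x_t|\hat x^{t-1},y^t)}{U(\hat x_t)}$, where $U$ is the uniform pmf of $\tilde X$; then I would do the same for \eqref{Eq.ILAF} to get $f_t^\star = \log \frac{P_\theta(\hat x_t|\hat x^{t-1})}{U(\hat x_t)}$. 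Subtracting the two optimizers cancels the common $\log U(\hat x_t)$ term and yields exactly \eqref{Eq.ILA}. The uniform reference variable $\tilde X$ is precisely what makes the two reference measures coincide so that the subtraction is clean.

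The key computational step is the single-time-slice variational lemma. Consider the objective $J(g) = \ES{g(\hat X_t,\hat X^{t-1},Y^t) - e^{g(\tilde X,\hat X^{t-1},Y^t)-1}}$. The first expectation is over the true joint law of $(\hat X_t,\hat X^{t-1},Y^t)$, while in the second term $\tilde X$ is drawn uniformly and independently of $(\hat X^{t-1},Y^t)$, so that term equals $\ES{\frac{1}{M}\sum_{i} e^{g(x_{(i)},\hat X^{t-1},Y^t)-1}}$. I would write $J(g)$ as an expectation over $(\hat X^{t-1},Y^t)$ of the inner quantity
\begin{equation}
\sum_{i} P_\theta\!\paren{x_{(i)}\,|\,\hat X^{t-1},Y^t}\, g(x_{(i)},\cdot) - \tfrac{1}{M}\sum_{i} e^{g(x_{(i)},\cdot)-1},\nonumber
\end{equation}
and then optimize over each value $g(x_{(i)},\cdot)$ independently. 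Setting the derivative with respect to $g(x_{(i)},\cdot)$ to zero gives $P_\theta(x_{(i)}|\hat X^{t-1},Y^t) = \tfrac{1}{M} e^{g(x_{(i)},\cdot)-1}$, i.e. $g^\star(x_{(i)},\cdot) = 1 + \log\paren{M\, P_\theta(x_{(i)}|\hat X^{t-1},Y^t)} = \log \frac{P_\theta(x_{(i)}|\hat X^{t-1},Y^t)}{U(x_{(i)})}$ since $U(x_{(i)}) = 1/M$. Concavity of $u \mapsto -e^{u-1}$ confirms this stationary point is the global maximizer. Because the summands at different times $t$ involve disjoint function variables $g_t$ and are added, maximizing the sum in \eqref{Eq.ILAG} amounts to maximizing each $t$-term separately, so the per-time optimizer above is indeed the optimal $g_t^\star$.

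The same argument applied verbatim to \eqref{Eq.ILAF}, with the private history $Y^t$ removed from the conditioning, yields $f_t^\star(\hat x_t,\hat x^{t-1}) = \log \frac{P_\theta(\hat x_t|\hat x^{t-1})}{U(\hat x_t)}$. Taking the difference $g_t^\star - f_t^\star$ cancels the $\log U$ terms and produces the claimed identity \eqref{Eq.ILA}. \textbf{The main obstacle} I anticipate is handling the measurability and integrability carefully enough to justify the pointwise (per-value, per-conditioning) optimization inside the outer expectation: one must argue that the supremum over measurable functions $g_t$ is attained by the function whose value at each $(\hat x_t,\hat x^{t-1},y^t)$ is the pointwise optimizer, which requires that the optimizer be a legitimate measurable function and that the exchange of supremum and expectation is valid. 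Since $\hat X_t$ ranges over the finite set $\set{x_{(1)},\dots,x_{(M)}}$, this reduces to optimizing finitely many real parameters for each value of the conditioning variables, so the interchange is justified provided $P_\theta(\hat x_t|\hat x^{t-1},y^t)$ is bounded away from zero (or one restricts to its support); I would note this support caveat and otherwise treat the finiteness of $\mathcal X$'s tessellation as removing any real analytic difficulty.
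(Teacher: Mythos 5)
Your proposal is correct and follows essentially the same route as the paper: both identify the optimizers of \eqref{Eq.ILAG} and \eqref{Eq.ILAF} as log-likelihood ratios with respect to the uniform reference $P(\tilde{x})$ (the paper by invoking the variational representation of the KL divergence as a cited lemma, you by deriving the same first-order condition pointwise and checking concavity) and then subtract so that the uniform reference cancels. The only slip is a bookkeeping one: your own stationarity condition gives $g_t^\star=\log\bigl(P_\theta(\hat x_t|\hat x^{t-1},y^t)/P(\tilde{x})\bigr)+1$ rather than the version without the $+1$, but the same additive constant appears in $f_t^\star$ and cancels in the difference, so the identity \eqref{Eq.ILA} is unaffected.
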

\begin{proof}
	See Appendix \ref{App: Th.PG}.
\end{proof}
According to Theorem \ref{Th.ILA}, the information loss can be computed using the functions $g_t\paren{\cdot}$ and $f_t\paren{\cdot}$. However, solving the optimization problems can be difficult as $(i)$ they are functional optimization problems, and $(ii)$ the exact computation of the expectations in   \eqref{Eq.ILAG} and \eqref{Eq.ILAF} becomes prohibitive when $T$ is large. To tackle the first issue, the functions $g\paren{\cdot}=\left[g_0\paren{\cdot},\dots,g_T\paren{\cdot}\right]^\top$ and $f\paren{\cdot}=\left[f_0\paren{\cdot},\dots,f_T\paren{\cdot}\right]^\top$ can be restricted to certain  parameterized families of functions $g_{\phi}\paren{\cdot}$ and $f_{\psi}\paren{\cdot}$, respectively, where $\phi,\psi$ are two real-valued parameters, \emph{e.g.,} see \cite{tsur2023neural} for examples of such parametrization. Thus, the search space of the optimization problems in the variational representation can be restricted to finite-dimensional Euclidean spaces. To tackle the second issue, stochastic gradient ascent algorithms can be used to numerically solve optimization problems, wherein the exact computation of the expectation is replaced with the mean of sample trajectories of $\hat{X}^T$ and $Y^{T}$.

Algorithm \ref{Alg: PG} presents a numerical solution for computing the optimal policy using Theorems \ref{Th.PG} and \ref{Th.ILA}. Here, for a given $\theta$, we first numerically find the optimal values of $\phi$ and $\psi$ with a stochastic gradient ascent algorithm that uses the mean of sample trajectories of $\hat{X}^T$ and $Y^{T}$ to approximate the expectation operators as shown in \eqref{Eq: Grad-phi} and \eqref{Eq: Grad-psi}. The expectation approximation uses the subscript $\omega$ as the index of sample trajectories. Given the optimal values of  $\phi$ and $\psi$, the information loss can be approximated as $\left(g_{\phi,t}-f_{\psi,t}\right)$. Finally, we update the value of $\theta$ using the information loss approximator and a stochastic gradient descent algorithm which approximates the expectation using the sample trajectories of $X^T$,$\tilde{Z}^T$, $\hat{X}^T$ and $Y^{T}$. This procedure is repeated until the algorithm converges. 

Note that the complexity of computing the information loss in Algorithm \ref{Alg: PG} increases linearly with time whereas the complexity of the direct computation of the information loss grows exponentially with time. Thus, the variational approach provides a low-complexity solution for computing information loss. 

\begin{algorithm}[htbp] 
	\caption{Policy Gradient with Information Loss Approximator} \label{Alg: PG}
	Initialize the policy $\pi_\theta$, the approximator $\left(g_{\phi,t}-f_{\psi,t}\right)$ and step sizes $\alpha$, $\beta$, $\gamma$.
	\begin{algorithmic}[1] 
		\Repeat
        \State Given $\theta$, generate $\left\{Y_{t,\omega},X_{t,\omega},\tilde{Z}_{t,\omega},\hat{X}_{t,\omega}\right\}_{t=0}^T, \omega=1,\dots,K.$
		\State {\bf Information loss approximator step:}
		\Repeat
		\begin{equation}\label{Eq: Grad-phi}
		    \qquad \nabla _{\phi}F_1\left( \phi \right) =\nabla _{\phi}\left[ \frac{1}{K}\sum_{\omega=1}^K{\sum_{t=0}^T{g_{\phi,t}\left( \hat{X}_{t,\omega},\hat{X}_{\omega}^{t-1},Y_{\omega}^{t} \right) -e^{g_{\phi,t}\left( \tilde{X}_{\omega},\hat{X}_{\omega}^{t-1},Y_{\omega}^{t} \right) -1}}} \right] , 
		\end{equation}
		\begin{equation}\label{Eq: Grad-psi}
		    \qquad \nabla _{\psi}F_2\left( \psi \right) =\nabla _{\psi}\left[ \frac{1}{K}\sum_{k=1}^K{\sum_{t=0}^T{f_{\psi,t}\left( \hat{X}_{t,\omega},\hat{X}_{\omega}^{t-1} \right) -e^{f_{\psi,t}\left( \tilde{X}_{\omega},\hat{X}_{\omega}^{t-1} \right) -1}}} \right] , 
		\end{equation}
		\State update the approximator via $\phi \gets \phi+\alpha\nabla _{\phi}F_1\left( \phi \right)$ and $\psi \gets \psi+\beta\nabla _{\psi}F_2\left( \psi \right)$.
		\Until{Convergence}
		\State {\bf Policy gradient step:} 
		\begin{scriptsize}
			\begin{equation}
				\qquad \nabla _{\theta}L\left( \theta \right) = \nabla _{\theta}\left[ \frac{1}{K}\sum_{k=1}^K{\left( \sum_{t=0}^T{l_d\left( X_{t,\omega},\hat{X}_{t,\omega} \right) +\lambda g_{\phi,t}\left( \hat{X}_{t,\omega},\hat{X}_{\omega}^{t-1},Y_{\omega}^{t} \right) -\lambda f_{\psi,t}\left( \hat{X}_{t,\omega},\hat{X}_{\omega}^{t-1} \right)} \right) \left( \sum_{t=0}^T{\nabla _{\theta}\log \pi _{\theta}\left( \left. \hat{X}_{t,\omega} \right|h_{t,\omega} \right)} \right)} \right] , \nonumber
			\end{equation}
		\end{scriptsize}
		\State \qquad update the policy via $\theta \gets \theta-\gamma\nabla _{\theta}L\left( \theta \right)$
		\Until{Convergence}
	\end{algorithmic}
\end{algorithm}

\section{Numerical Results}\label{Sec:Sim}
In this section, we numerically study the privacy-aware estimation of the CO$_2$ level of a building zone based on noisy CO$_2$ measurements. To this end, we assume CO$_2$  evolves according to  $X_{t+1} = 0.75X_t + 0.2Y_t + W_t,$ where $\left\{W_t\right\}_t$  is the sequence of iid zero mean Gaussian random variables with standard deviation $0.05$, and $\left\{Y_t\right\}$ is the  zone occupancy  modeled as a Markov chain with the support set $\left\{0,1,2\right\}$, and the following transition probability matrix 
	$$\left[ \begin{matrix}	0.8&		0.1&		0.1\\	0.1&		0.8&		0.1\\	0.0&		0.08&		0.92\\\end{matrix} \right] . $$

The CO$_2$  measurement at time $t$ is given by $Z_t = X_t + V_t$, where $\left\{V_t\right\}_t$ is a sequence of iid Gaussian random variables with zero means and standard deviation  $0.1$. The sensor measurements are quantized using a uniform quantizer with quantization sensitivity $0.1$. The quantized measurements are then used to estimate the CO$_2$ using the optimal privacy-aware estimator.  The horizon length was $64$ in our simulations.

We first study the trade-off between the estimation error and privacy in two cases: ($i$) optimal privacy-aware estimation ($ii$) the additive noise approach, shown in Fig. \ref{Fig.AddNoise}. In the additive noise approach, we first use a classical minimum mean square estimator to estimate the state. Then, the state estimates are perturbed using Gaussian noise with zero mean and variance $\sigma^2$, i.e., $\mathcal{N}\left(0,\sigma^2\right)$, to ensure privacy.  In each case,  we infer the occupancy based on the state estimates via maximum likelihood estimation method.

\begin{figure}[H]
	\centering
	\includegraphics[width=0.3\textwidth]{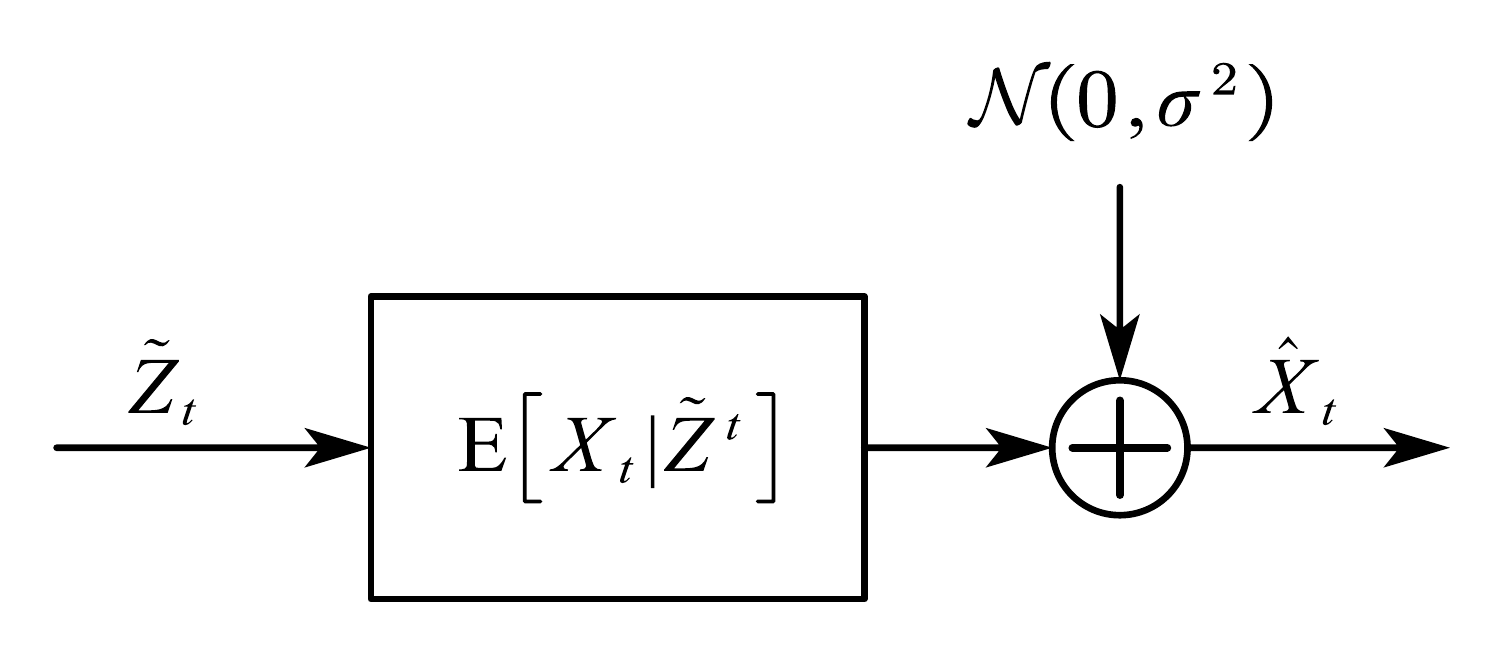}
	\caption{Additive noise approach.}
	\label{Fig.AddNoise}
\end{figure}
Fig. \ref{Fig.PriDistCmp} shows the total state estimation error under the optimal privacy-aware estimation and the additive noise approaches versus the accuracy of the maximum likelihood occupancy estimator.  A small value of accuracy implies that the occupancy estimator cannot infer the occupancy correctly. Thus, smaller accuracy values correspond to higher privacy levels.
\begin{figure}[H]
	\centering
	\includegraphics[width=0.4\textwidth]{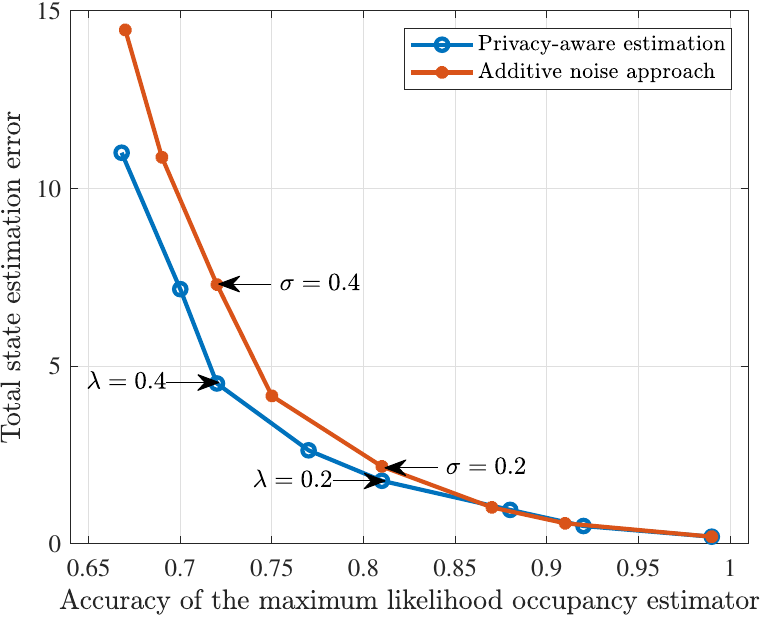}
	\caption{Total estimation error under the optimal privacy-aware estimation and the additive noise approaches versus the accuracy of the occupancy estimator.}
	\label{Fig.PriDistCmp}
\end{figure}
According to Fig. \ref{Fig.PriDistCmp}, the total state estimation error increases as the accuracy of the occupancy estimator becomes small. This is because, to increase the privacy level, we need to increase the penalty on the leakage level of private information $(\lambda)$ in the optimization \eqref{Eq.OP}. Also, based on Fig. \ref{Fig.PriDistCmp}, the optimal privacy-aware estimator, results in a smaller estimation error compared with the additive noise approach. 

 Fig. \ref{Fig.EstResA}-\ref{Fig.EstResC} show trajectories of the output of the optimal privacy-aware estimator for different values of $\lambda$. According to this figure, the estimation error is small when $\lambda$ equals zero, corresponding to the smallest privacy level. However, as $\lambda$ becomes large, the privacy level increases at the cost of increasing the total state estimation error. Finally, Fig. \ref{Fig.EstResD} shows the misdetection instance of the maximum likelihood occupancy estimator when $\lambda$ is equal to $0.2$ and $0.4$. In this figure, the indicator function $\I{\hat{Y}_t=Y_t}$ is equal to one when the occupancy estimator makes a mistake and is zero otherwise. According to this figure, the number of misdetection instances increases as $\lambda$ becomes large, which indicates that the occupancy estimator cannot accurately infer the occupancy as $\lambda$ increases.


\begin{figure}[!htbp]
	\centering
	\begin{subfigure}[b]{.45\textwidth}
		\includegraphics[scale=0.6]{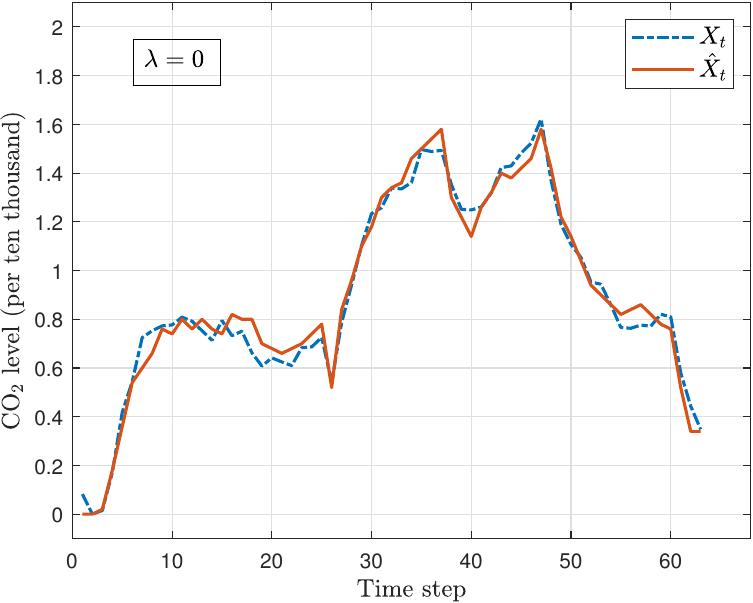}
            \caption{}
		\label{Fig.EstResA}
        \end{subfigure}
        \begin{subfigure}[b]{.45\textwidth}
		\includegraphics[scale=0.6]{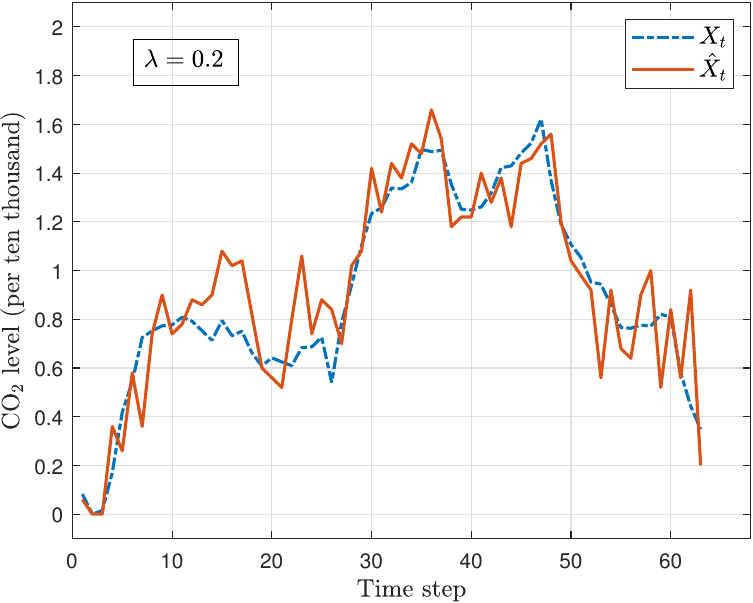}
            \caption{}
		\label{Fig.EstResB}
        \end{subfigure}
        \begin{subfigure}[b]{.45\textwidth}
		\includegraphics[scale=0.6]{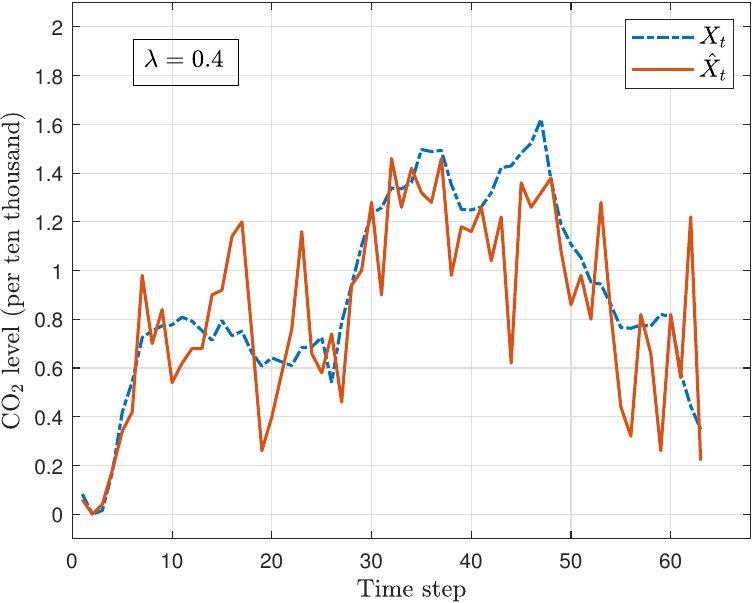}
            \caption{}
		\label{Fig.EstResC}
        \end{subfigure}
        \begin{subfigure}[b]{.45\textwidth}
        \quad
		\includegraphics[scale=0.6]{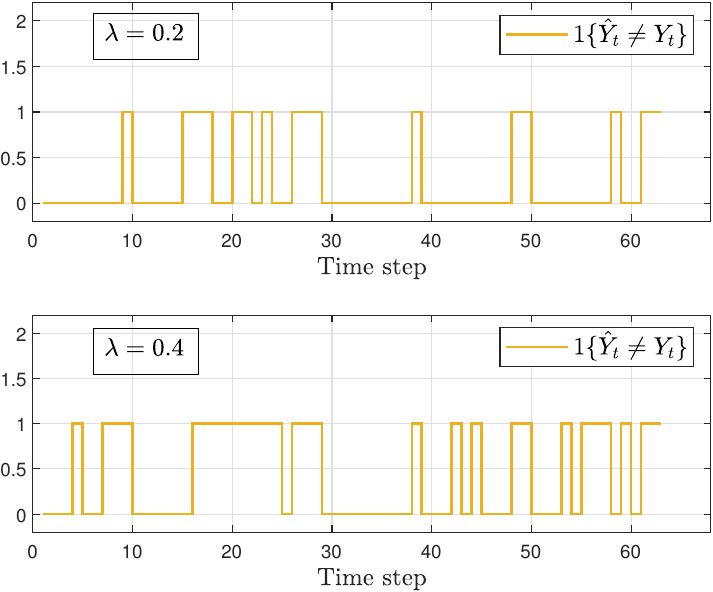}
            \caption{}
		\label{Fig.EstResD}
        \end{subfigure}

	\caption{Trajectories of the estimator output for different values of $\lambda$ ($a$)-($c$),  Misdetection instances of the maximum likelihood occupancy estimator for $\lambda=0.2$ and $\lambda=0.4$ ($d$).
	}
	\label{Fig.EstRes}
\end{figure}


\section{Conclusion}\label{Sec:Con}
In this paper, we developed an information-theoretic framework for the optimal privacy-aware estimation in dynamic settings. We studied the structural properties of the optimal estimator via dynamic programming decomposition. We also proposed a policy gradient algorithm for computing the optimal estimator based on a variational formulation of the mutual information. Our numerical results show that the information-theoretic approach to privacy outperforms the additive noise approach.

\bibliographystyle{ieeetr}
\bibliography{reference}

\clearpage
\appendices	
	\section{Proof of Lemma \ref{Lm.Equivalent}}\label{App:Lm.Equivalent}
	We first show that given an estimation policy $\pi=\left\{\pi_t\right\}^T_{t=0}$ we can construct a policy $\mathcal{A}=\left\{\mathcal{A}_t\right\}^T_{t=0}$ for the auxiliary decision-making problem which achieves the same value of objective function as $\pi$. To this end, given $\pi$, we choose the policy collection as
	\begin{equation}
	\mathcal{A}_t=\left\{ a_t\left( \left. \hat{x}_t\right|\tilde{z}^t \right)=\pi_{t}\left({\hat{x_t}}\left|{\tilde{z}^t,\hat{X}^{t-1}}\right. \right), \forall \hat{x}_t, \tilde{z}^t \right\}. \nonumber
	\end{equation}
	Clearly, we have $L\left(\mathcal{A}\right)=L\left(\pi\right)$ where  $L\left(\mathcal{A}\right)$ and $L\left(\pi\right)$ are the values of objective function under $\pi$ and $\mathcal{A}$, respectively. We next show that given $\mathcal{A}$ we can construct an estimation policy $\pi$ which achieves the same value of objective function as $\mathcal{A}$. Given $\mathcal{A}$, let $	\pi_{0}\left(\left. {\hat{X_0}}\right|{\tilde{Z}_0}\right)=a_0\left(\left. \hat{X}_0\right|\tilde{Z}_0\right)$. For $t>0$, given the policy collection $\mathcal{A}_t=\mathcal{P}_t\paren{\hat{X}^{t-1}}$ via \eqref{Eq.OP2}, we select the estimation policy by,
	\begin{equation}
	\pi_{t}\left(\left. {\hat{X_t}}\right|{\tilde{Z}^t,\hat{X}^{t-1}}\right)=a_t\left(\left. \hat{X}_t\right|\tilde{Z}^t\right), \nonumber
	\end{equation}
Clearly, we have $L\left(\pi\right)=L\left(\mathcal{A}\right)$. Thus, the auxiliary decision-making problem and the original optimization problem are equivalent. 
	
\section{Proof of Theorem \ref{Th.OPTEQU}}\label{App:Th.OPTEQU}
To prove this result, we first derive an expansion for the mutual information term in \eqref{Eq.OP2}.	
	\begin{lemma} \label{Lm.MISimp}
		The mutual information term in \eqref{Eq.OP2} can be expanded as 
		\begin{align}
				I\left( \hat{X}^T;Y^T \right) &=\sum_{t=1}^T{I\left(\left. \hat{X}_t;Y^{t-1}\right|\hat{X}^{t-1} \right)}.   \nonumber
		\end{align}
	\end{lemma}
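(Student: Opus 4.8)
The plan is to prove the identity by two applications of the chain rule for mutual information, together with a conditional-independence property that encodes the causal structure of the system. First I would apply the chain rule along the coordinates of $\hat{X}^T$ to write
\[
I\left(\hat{X}^T;Y^T\right)=\sum_{t=0}^T I\left(\hat{X}_t;Y^T\left|\hat{X}^{t-1}\right.\right),
\]
where $\hat{X}^{-1}$ denotes the empty history. This reduces the claim to analyzing each summand $I(\hat{X}_t;Y^T\mid\hat{X}^{t-1})$ separately.

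Next, for each $t$ I would split $Y^T$ into the past $Y^{t-1}=\left(Y_0,\dots,Y_{t-1}\right)$ and the present-and-future $\left(Y_t,\dots,Y_T\right)$, and apply the chain rule a second time to obtain
\[
I\left(\hat{X}_t;Y^T\left|\hat{X}^{t-1}\right.\right)=I\left(\hat{X}_t;Y^{t-1}\left|\hat{X}^{t-1}\right.\right)+I\left(\hat{X}_t;Y_t,\dots,Y_T\left|\hat{X}^{t-1},Y^{t-1}\right.\right).
\]
The crux is to show that the second term vanishes, i.e. that $\hat{X}_t$ is conditionally independent of the present and future private variables given $\left(\hat{X}^{t-1},Y^{t-1}\right)$.

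To establish this conditional independence I would exploit the information pattern of the estimator together with the forward causality of the dynamics. The output $\hat{X}_t$ is a randomized function of $\left(\tilde{Z}^t,\hat{X}^{t-1}\right)$ through the policy $\pi_t$, with internal randomization independent of $Y^T$. In turn, $\tilde{Z}^t$ is a function of $X^t$ and the measurement noise, and by the state recursion $X_{t+1}=f\left(X_t,Y_t,W_t\right)$ the state $X_t$ depends on the private process only through $Y^{t-1}$; the future private variables $Y_t,\dots,Y_T$ enter neither $X^t$, nor $\tilde{Z}^t$, nor $\hat{X}^{t-1}$. Consequently, conditioned on $\left(\hat{X}^{t-1},Y^{t-1}\right)$, the estimate $\hat{X}_t$ is independent of $\left(Y_t,\dots,Y_T\right)$, so the second mutual-information term is zero. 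I expect this causal bookkeeping — writing out the joint factorization and verifying that the future of the private process is screened off by $Y^{t-1}$ — to be the main obstacle, since it must be argued from the model itself rather than by a purely algebraic manipulation.

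Finally, combining the two displays yields $I\left(\hat{X}_t;Y^T\mid\hat{X}^{t-1}\right)=I\left(\hat{X}_t;Y^{t-1}\mid\hat{X}^{t-1}\right)$ for every $t$, and the $t=0$ summand vanishes because $Y^{-1}$ is empty. Summing over $t=1,\dots,T$ then produces the claimed expansion $I\left(\hat{X}^T;Y^T\right)=\sum_{t=1}^T I\left(\hat{X}_t;Y^{t-1}\mid\hat{X}^{t-1}\right)$.
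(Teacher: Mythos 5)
Your proposal is correct and follows essentially the same route as the paper's proof: the chain rule over the coordinates of $\hat{X}^T$, a second chain rule splitting $Y^T$ into $Y^{t-1}$ and $Y_t^T$, and the vanishing of $I\paren{\hat{X}_t;Y_t^T\left|\hat{X}^{t-1},Y^{t-1}\right.}$ via the Markov chain $\hat{X}_t \rightarrow Y^{t-1} \rightarrow Y_t^T$ implied by the causal factorization of the joint law. Your additional bookkeeping (spelling out why $\hat{X}_t$ is screened off from the future of the private process, and why the $t=0$ summand vanishes) only makes explicit what the paper leaves implicit.
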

\begin{proof}
	See Appendix \ref{App: MIE}.
\end{proof}
	Using Lemma \ref{Lm.MISimp} and the definition of conditional mutual information, the objective function of \eqref{Eq.OP2} can be written as 
\begin{align}
\sum_{t=0}^T{\mathsf{E}\left[ \textit{l}_{d}\left( X_t,\hat{X}_t \right) \right]}+\lambda I\left( \hat{X}^T;Y^T \right) &=\sum_{t=0}^T{\mathsf{E}\left[ \textit{l}_{d}\left( X_t,\hat{X}_t \right)  \right] +\lambda I\left(\left. \hat{X}_t;Y^{t-1}\right|\hat{X}^{t-1} \right)} \nonumber\\
 &=\sum_{t=0}^T\mathsf{E}\left[  \textit{l}_{d}\left( X_t,\hat{X}_t \right)+\lambda \log \frac{P\left(\left. \hat{X}_t,Y^{t-1}\right|\hat{X}^{t-1} \right)}{P\left(\left. \hat{X}_t\right|\hat{X}^{t-1} \right)P\left(Y^{t-1}\left|\hat{X}^{t-1}\right. \right)}\right] 
\end{align}
The next lemma expresses the objective function of \eqref{Eq.OP2} in terms of the belief state $b_t$.
	\begin{lemma} \label{Lm.BeliefStateLoss}
		Let $b_t$ denotes the belief state, \emph{i.e.,} $
			b_t\left( y^{t-1},\tilde{z}^t \right) =  P\left(y^{t-1},\tilde{z}^t\left|\hat{X}^{t-1}\right. \right)$. Given  policy $\mathcal{A}=\left\{\mathcal{A}_t\right\}^T_{t=0}$, we have 
			\begin{align}
		&\mathsf{E}\left[ \left. \textit{l}_{d}\left( X_t,\hat{X}_t \right)	+\lambda \log \frac{P\left(\left. \hat{X}_t,Y^{t-1}\right|\hat{X}^{t-1} \right)}{P\left(\left. \hat{X}_t\right|\hat{X}^{t-1} \right)P\left(Y^{t-1}\left| \hat{X}^{t-1} \right. \right)}\right|\hat{X}^{t-1} \right] \nonumber
			\\
			&=\sum_{y^{t-1},\tilde{z}^t,\hat{x}_t}a_t\left(\left. \hat{x}_t\right|\tilde{z}^t \right) b_t\left( y^{t-1},\tilde{z}^t \right) \Big[ \int{p\left(\left. x_t\right|y^{t-1},\tilde{z}^t \right) \textit{l}_{d}\left( x_t,\hat{x}_t \right) dx_t}\nonumber
			\\
			&+\lambda \log \frac{\sum_{\tilde{z}^t}{a_t\left(\left. \hat{x}_t\right|\tilde{z}^t \right) b_t\left( y^{t-1},\tilde{z}^t \right)}}{\left( \sum_{y^{t-1},\tilde{z}^t}{a_t\left(\left. \hat{x}_t\right|\tilde{z}^t \right) b_t\left( y^{t-1},\tilde{z}^t \right)} \right) \left( \sum_{\tilde{z}^t}{b_t\left( y^{t-1},\tilde{z}^t \right)} \right)} \Big],
		\end{align}
		Moreover, $b_{t+1}$ can be updated as follows
			\begin{equation} \label{Eq.BSUP2}
					b_{t+1}\left( y^t,\tilde{z}^{t+1} \right) =\frac{a _t\left( \left.\hat{X}_t\right|\tilde{z}^t\right) b_t\left( y^{t-1},\tilde{z}^t \right) P\left(\left. y_t\right|y_{t-1} \right) \int{P\left(\left. \tilde{z}_{t+1}\right|x_{t+1} \right) p\left(\left. x_{t+1}\right|y^t,\tilde{z}^t \right) dx_{t+1}}}{\sum_{y^{t-1},\tilde{z}^t}{a_t\left(\left. \hat{X}_t\right|\tilde{z}^t \right) b_t\left( y^{t-1},\tilde{z}^t \right)}}.
			\end{equation}
			We use $b_{t+1}=\varPhi \left( b_t,\mathcal{A}_t,\hat{X}_t \right)$ to compactly represent this recursive update.
	\end{lemma}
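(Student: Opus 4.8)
The plan is to derive both claims from the factorization of the joint law of $\left(X_t, Y^{t-1}, \tilde{Z}^t, \hat{X}_t\right)$ induced by the generative model, conditioning throughout on the past estimator outputs $\hat{X}^{t-1}$. The structural facts I will extract from the model are: (i) $\hat{X}_t$ is drawn from $a_t\left(\cdot \left| \tilde{z}^t\right.\right)$, so given $\left(\tilde{z}^t,\hat{X}^{t-1}\right)$ it is conditionally independent of $Y^{t-1}$ and $X_t$, since the estimator never observes the private process; (ii) because the past outputs are randomized functions of the measurement history, $X_t$ is conditionally independent of $\hat{X}^{t-1}$ given $\left(Y^{t-1}, \tilde{Z}^t\right)$, whence $p\left(\left.x_t\right| y^{t-1}, \tilde{z}^t, \hat{X}^{t-1}\right) = p\left(\left.x_t\right| y^{t-1}, \tilde{z}^t\right)$; and (iii) the private process is Markov, so $P\left(\left.y_t\right| y^{t-1}, \tilde{z}^t, \hat{X}^t\right) = P\left(\left.y_t\right| y_{t-1}\right)$, using that $\tilde{Z}^t$ depends on $Y^{t-1}$ but not on $Y_t$.

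For the first identity I would write the conditional expectation given $\hat{X}^{t-1}$ as a sum over $\left(y^{t-1}, \tilde{z}^t, \hat{x}_t\right)$ and an integral over $x_t$, against the joint conditional law $p\left(\left.x_t\right| y^{t-1}, \tilde{z}^t\right)\, a_t\left(\left.\hat{x}_t\right| \tilde{z}^t\right)\, b_t\left(y^{t-1}, \tilde{z}^t\right)$, where the last two factors follow from (i) and the definition of $b_t$. The distortion term then integrates against $p\left(\left.x_t\right| y^{t-1}, \tilde{z}^t\right)$ to produce the first summand on the right-hand side. For the log-likelihood-ratio term I observe that its argument depends only on $\left(\hat{x}_t, y^{t-1}\right)$, and I express each of its three constituent probabilities as a marginal of the same joint: $P\left(\left.\hat{x}_t, y^{t-1}\right|\hat{X}^{t-1}\right) = \sum_{\tilde{z}^t} a_t\left(\left.\hat{x}_t\right| \tilde{z}^t\right) b_t\left(y^{t-1}, \tilde{z}^t\right)$, $P\left(\left.\hat{x}_t\right|\hat{X}^{t-1}\right) = \sum_{y^{t-1}, \tilde{z}^t} a_t\left(\left.\hat{x}_t\right| \tilde{z}^t\right) b_t\left(y^{t-1}, \tilde{z}^t\right)$, and $P\left(\left.y^{t-1}\right|\hat{X}^{t-1}\right) = \sum_{\tilde{z}^t} b_t\left(y^{t-1}, \tilde{z}^t\right)$. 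Substituting these into the logarithm yields exactly the stated expression.

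For the belief update I would apply Bayes' rule with respect to the newest output:
\[ b_{t+1}\left(y^t, \tilde{z}^{t+1}\right) = P\left(\left.y^t, \tilde{z}^{t+1}\right| \hat{X}^{t-1}, \hat{X}_t\right) = \frac{P\left(\left.y^t, \tilde{z}^{t+1}, \hat{X}_t\right| \hat{X}^{t-1}\right)}{P\left(\left.\hat{X}_t\right| \hat{X}^{t-1}\right)}. \]
The denominator is the marginal already computed above. For the numerator I chain-rule over the coordinates $\left(\hat{X}_t, y_t, \tilde{z}_{t+1}\right)$ appended to $b_t\left(y^{t-1}, \tilde{z}^t\right)$ and simplify the three new factors by the structural facts: the $\hat{X}_t$ factor becomes $a_t\left(\left.\hat{X}_t\right| \tilde{z}^t\right)$ by (i); the $y_t$ factor becomes $P\left(\left.y_t\right| y_{t-1}\right)$ by (iii); and the $\tilde{z}_{t+1}$ factor becomes $\int P\left(\left.\tilde{z}_{t+1}\right| x_{t+1}\right)\, p\left(\left.x_{t+1}\right| y^t, \tilde{z}^t\right)\, dx_{t+1}$ after inserting $X_{t+1}$, using the measurement kernel together with the facts that $\tilde{Z}_{t+1}$ depends on the history only through $X_{t+1}$ and that $X_{t+1}$ is conditionally independent of the past outputs given $\left(y^t, \tilde{z}^t\right)$. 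Dividing the resulting product by the denominator gives \eqref{Eq.BSUP2}.

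The main obstacle I anticipate is not algebraic but in justifying the conditional-independence relations (i)--(iii) rigorously from the generative description, since the estimator outputs are fed back and are therefore correlated with every downstream quantity. The cleanest route is to argue from the joint factorization of $\left(X^T, Y^T, \tilde{Z}^T, \hat{X}^T\right)$ implied by the state kernel, the private-process kernel, the measurement-and-quantization map, and the policy, and then read off each needed independence as a $d$-separation statement in the induced graph; everything that remains is marginalization and Bayes' rule.
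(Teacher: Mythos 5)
Your proposal is correct and follows essentially the same route as the paper: expand the conditional expectation against the factorized joint law $p\left(\left.x_t\right|y^{t-1},\tilde{z}^t\right)a_t\left(\left.\hat{x}_t\right|\tilde{z}^t\right)b_t\left(y^{t-1},\tilde{z}^t\right)$, express the three probabilities inside the logarithm as the same marginals of $a_t\,b_t$, and obtain the belief recursion via Bayes' rule on $\hat{X}_t$ together with the conditional independences $P\left(\left.\tilde{z}_{t+1}\right|x_{t+1},y^t,\tilde{z}^t,\hat{X}^t\right)=P\left(\left.\tilde{z}_{t+1}\right|x_{t+1}\right)$ and $p\left(\left.x_{t+1}\right|y^t,\tilde{z}^t,\hat{X}^t\right)=p\left(\left.x_{t+1}\right|y^t,\tilde{z}^t\right)$. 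The paper asserts these independences directly from the model rather than via an explicit $d$-separation argument, but the substance is identical.
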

\begin{proof}
	See Appendix \ref{App:Lemma3}.
\end{proof}

	At time $T$, given $\hat{X}^{T-1}$, we define the following optimal cost-to-go function,
	\begin{align}
		V_{T}^{\star}\left( \hat{X}^{T-1} \right) =\min_{a_T} \mathsf{E}\left[ \left. \textit{l}_{d}\left( X_T,\hat{X}_T \right)+\lambda \log \frac{P\left(\left. \hat{X}_T,Y^{T-1}\right|\hat{X}^{T-1} \right)}{P\left(\left. \hat{X}_T\right|\hat{X}^{T-1} \right) P\left(Y^{T-1}\left|\hat{X}^{T-1}\right. \right)}\right|\hat{X}^{T-1} \right]
		.\nonumber
	\end{align}
	With Lemma \ref{Lm.BeliefStateLoss}, $V_{T}^{\star}\left( \hat{X}^{T-1} \right)$ is determined by the belief state $b_{T}$, and the optimal policy collection $\mathcal{A}^{\star}_T$. We denote it with $V_{T}^{\star}\left(b_{T} \right)$.\\
	Next, we show by induction that $V_{t}^{\star}\left( \hat{X}^{t-1} \right)$ only depends on $b_{t}$ if $V_{t+1}^{\star}\left( \hat{X}^{t} \right)$ only depends on $b_{t+1}$. Based on the optimality principle, $V_{t}^{\star}\left( \hat{X}^{t-1} \right)$ could be written as,
	\begin{align} \label{Eq.Vt}
			V_{t}^{\star}\left( \hat{X}^{t-1} \right) =&\min_{\mathcal{A}_t} \mathsf{E}\left[\left. \textit{l}_{d}\left( X_t,\hat{X}_t \right)+\lambda \log \frac{P\left(\left. \hat{X}_t,Y^{t-1}\right|\hat{X}^{t-1}\right)}{P\left(\left. \hat{X}_t\right|\hat{X}^{t-1}\right) P\left(Y^{t-1}\left|\hat{X}^{t-1}\right. \right)}\right|\hat{X}^{t-1} \right]\\ &+\mathsf{E}\left[ V_{t+1}^{\star}\left( b_{t+1} \right) \left| \hat{X}^{t-1}\right. \right]  .
	\end{align}
	According to Lemma \ref{Lm.BeliefStateLoss}, the first term is determined by $b_t$. As for the second term, 
	\begin{align}
		\mathsf{E}\left[ V_{t+1}^{\star}\left( b_{t+1} \right) \left| \hat{X}^{t-1}\right. \right] &= \sum_{\hat{x}_t}P\left(\hat{x}_t\left|\hat{X}^{t-1}\right. \right) V_{t+1}^{\star}\left( b_{t+1} \right)\\
		&= \sum_{y^{t-1},\tilde{z}^t,\hat{x}_t}{a_t\left(\left. \hat{x}_t\right|\tilde{z}^t \right) b_t\left( y^{t-1},\tilde{z}^t \right)} V_{t+1}^{\star}\left(\varPhi \left( b_t,\mathcal{A}_t,\hat{x}_t  \right)\right)
	\end{align}
	which is also determined by $b_t$. \\
	\section{Proof of Theorem \ref{Th.PG}}\label{App: Th.PG}
We first derive an expression for the gradient of the mutual information with respect to $\theta$ as follows
	\begin{align}\label{Eq:MI-Grad}
		\nabla _{\theta}I_{\theta}\left( \hat{X}^T;Y^T \right) &\overset{(a)}{=}\nabla _{\theta}\sum_{t=1}^T{I_\theta\left(\left. \hat{X}_t;Y^{t-1}\right|\hat{X}^{t-1} \right)}\nonumber\\
			&\overset{(b)}{=}\mathsf{E}\left[ \sum_{t=0}^T{\log \frac{P_{\theta}\left(\left. \hat{X}_t\right|\hat{X}^{t-1},Y^{t-1} \right) }{P_{\theta}\left(\left. \hat{X}_t\right|\hat{X}^{t-1} \right)}} \nabla _{\theta}\log P_{\theta}\left( \tau \right)\right]+\mathsf{E}\left[ \nabla _{\theta}\sum_{t=0}^T{\log \frac{P_{\theta}\left(\left. \hat{X}_t\right|\hat{X}^{t-1},Y^{t-1} \right)}{P_{\theta}\left(\left. \hat{X}_t\right|\hat{X}^{t-1} \right)}} \right], 
	\end{align}
	where $\paren{a}$ follows from Lemma \ref{Lm.MISimp}, $\paren{b}$ follows from the definition of conditional mutual information and $\tau =\left\{ Y^{T},X^T,\tilde{Z}^T,\hat{X}^T \right\}$. The second term in the right-hand-side of \eqref{Eq:MI-Grad} can be expanded as 
	\begin{align}\label{Eq: log-P-Expan}
	\mathsf{E}\left[ \nabla _{\theta}\log \frac{P_{\theta}\left(\left. \hat{X}_t\right|\hat{X}^{t-1},Y^{t} \right)}{P_{\theta}\left(\left. \hat{X}_t\right|\hat{X}^{t} \right)} \right]={\mathsf{E}\left[ \nabla _{\theta}\log P_{\theta}\left( \left. \hat{X}_t\right|\hat{X}^{t-1},Y^{t-1} \right) \right]}-\mathsf{E}\left[ \nabla _{\theta}\log P_{\theta}\left(\left. \hat{X}_t\right|\hat{X}^{t-1} \right) \right].
	\end{align}

The first term in \eqref{Eq: log-P-Expan} can be expanded as 
\begin{align}
		{\mathsf{E}\left[ \nabla _{\theta}\log P_{\theta}\left(\left. \hat{X}_t\right|\hat{X}^{t-1},Y^{t-1} \right) \right]}
		&=\mathsf{E}\left[ \frac{\nabla _{\theta}P_{\theta}\left(\left. \hat{X}_t\right|\hat{X}^{t-1},Y^{t-1} \right)}{P_{\theta}\left(\left. \hat{X}_t\right|\hat{X}^{t-1},Y^{t-1} \right)} \right] 
		\\
		&=\mathsf{E}\left[ \sum_{\hat{x}_t}{\nabla _{\theta}}P_{\theta}\left(\left. \hat{x}_t\right|\hat{X}^{t-1},Y^{t-1} \right) \right] 
		\\
		&=\nabla _{\theta}\mathsf{E}[1]
		\\
		&=0. \nonumber
\end{align}
Similarly, we have 
\begin{align}
	\mathsf{E}\left[ \nabla _{\theta}\log P_{\theta}\left(\left. \hat{X}_t\right|\hat{X}^{t-1} \right) \right]\nonumber
	&=\mathsf{E}\left[ \frac{\nabla _{\theta}P_{\theta}\left(\left. \hat{X}_t\right|\hat{X}^{t-1} \right)}{P_{\theta}\left(\left. \hat{X}_t\right|\hat{X}^{t-1} \right)} \right] 
	\nonumber\\
	&=\mathsf{E}\left[ \sum_{\hat{x}_t}{\nabla _{\theta}}P_{\theta}\left(\left. \hat{x}_t\right|\hat{X}^{t-1} \right) \right] 
	\nonumber\\
	&=\nabla _{\theta}\mathsf{E}[1]
\nonumber	\\
	&=0.\nonumber
\end{align}
Note that $P_{\theta}\paren{\tau} $ can be written as 
\begin{align}
P_{\theta}\paren{\tau} = \prod_{t=0}^{T}{P\paren{\left. Y_{t} \right| Y_{t-1}}P\paren{\left. X_{t} \right| X_{t-1}, Y_{t-1}}P\paren{\left. \tilde{Z}_{t} \right| X_{t}}\pi_\theta\paren{\left. \hat{X}_{t} \right| h_{t}}},
\nonumber
\end{align}
which implies 
\begin{align}
\nabla _{\theta}\log P_{\theta}\paren{\tau} = \sum_{t=0}^{T}{\nabla _{\theta} \log \pi_\theta\paren{\left. \hat{X}_{t} \right| h_{t}}}.
\nonumber
\end{align}

Thus, we have
		\begin{align}
	\nabla _{\theta}I_{\theta}\left( \hat{X}^T;Y^T \right) =\mathsf{E}\left[ \paren{\sum_{t=0}^T{\log \frac{P_{\theta}\left(\left. \hat{X}_t\right|\hat{X}^{t-1},Y^{t-1} \right) }{P_{\theta}\left(\left. \hat{X}_t\right|\hat{X}^{t-1} \right)}}} \paren{\sum_{t=0}^{T}{\nabla _{\theta} \log \pi_\theta\paren{\left. \hat{X}_{t} \right| h_{t}}}}\right] . \nonumber
	\end{align}

Note that the gradient of the estimation error with respect to $\theta$ can be written as 
\begin{align}
\nabla _{\theta}\mathsf{E}\left[ \sum_{t=0}^T{\textit{l}_{d}\left( X_t,\hat{X}_t \right)} \right] =\mathsf{E}\left[ \paren{\sum_{t=0}^T{\textit{l}_{d}\left( X_t,\hat{X}_t \right)}} \paren{\sum_{t=0}^{T}{\nabla _{\theta} \log \pi_\theta\paren{\left. \hat{X}_{t} \right| h_{t}}}} \right],\nonumber
\end{align}
thus we complete the proof.

	\section{Proof of Theorem \ref{Th.ILA}}
	\begin{lemma} \label{Lm.KLV}
		Given two distributions $P\left(\cdot\right)$ and $Q\left(\cdot\right)$ with the same support $\mathcal{S}$, 	the KL-divergence between distributions $P\left(\cdot\right)$ and $Q\left(\cdot\right)$ has the following variational formulation \cite{nguyen2010estimating,poole2018variational},
		\begin{equation}
			D_{KL}\left[ \left.P\left(S\right)\right\|Q\left(S\right)\right] =\mathop {\sup} \limits_{f}{\mathsf{E}_P\left[ f\left(S\right) \right] -e^{-1}\mathsf{E}_Q\left[ e^{f\left(S\right)} \right]},
		\end{equation}
		where $f$ is a measurable function, with optimal solution
		$$f^\star\left(s\right)=\log \frac{P\left(s\right)}{Q\left(s\right)}+1 .$$
	\end{lemma}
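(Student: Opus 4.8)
The plan is to prove the identity by exhibiting the claimed maximizer $f^\star$ explicitly and showing it dominates every admissible competitor through a single elementary convexity inequality, rather than invoking general convex-duality machinery. Let $J\paren{f}=\mathsf{E}_P\sqparen{f\paren{S}}-e^{-1}\mathsf{E}_Q\sqparen{e^{f\paren{S}}}$ denote the functional to be maximized. First I would verify that $f^\star\paren{s}=1+\log\frac{P\paren{s}}{Q\paren{s}}$ attains the value $D_{KL}\sqparen{\left.P\paren{S}\right\|Q\paren{S}}$. Indeed, $\mathsf{E}_P\sqparen{f^\star\paren{S}}=1+D_{KL}\sqparen{\left.P\paren{S}\right\|Q\paren{S}}$, whereas $e^{f^\star\paren{s}}=e\cdot P\paren{s}/Q\paren{s}$ gives $e^{-1}\mathsf{E}_Q\sqparen{e^{f^\star\paren{S}}}=\mathsf{E}_Q\sqparen{P\paren{S}/Q\paren{S}}=1$ by normalization, so $J\paren{f^\star}$ collapses to exactly $D_{KL}\sqparen{\left.P\paren{S}\right\|Q\paren{S}}$.

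To establish optimality, I would reparametrize an arbitrary admissible $f$ as $f\paren{s}=f^\star\paren{s}+g\paren{s}$ and evaluate the gap $D_{KL}\sqparen{\left.P\paren{S}\right\|Q\paren{S}}-J\paren{f}$. The same cancellations as above, namely $e^{f\paren{s}}=e\cdot\frac{P\paren{s}}{Q\paren{s}}\cdot e^{g\paren{s}}$ and hence $e^{-1}\mathsf{E}_Q\sqparen{e^{f\paren{S}}}=\mathsf{E}_P\sqparen{e^{g\paren{S}}}$, reduce this gap to $\mathsf{E}_P\sqparen{e^{g\paren{S}}-g\paren{S}-1}$. The crux of the argument is then the pointwise inequality $e^{x}\geq 1+x$, valid for every real $x$ with equality only at $x=0$. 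This forces the integrand to be nonnegative, so $J\paren{f}\leq D_{KL}\sqparen{\left.P\paren{S}\right\|Q\paren{S}}$ for every $f$, with equality precisely when $g\equiv 0$ ($P$-almost surely), i.e.\ $f=f^\star$. Combining with the first step, the supremum equals $D_{KL}\sqparen{\left.P\paren{S}\right\|Q\paren{S}}$ and is attained uniquely at $f^\star$.

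I expect the only real obstacle to be technical rather than conceptual: delimiting the class of admissible $f$ so that all expectations are finite and the reparametrization is legitimate. Concretely, I would restrict to measurable $f$ with $\mathsf{E}_P\sqparen{\abs{f\paren{S}}}<\infty$ and $\mathsf{E}_Q\sqparen{e^{f\paren{S}}}<\infty$, and use the common-support hypothesis to guarantee that $P\paren{s}/Q\paren{s}$ is well-defined wherever $P$ places positive mass, so that $f^\star$ is itself admissible and every competitor can indeed be written as $f^\star+g$. With common support in force, the two identities $e^{-1}e^{f^\star\paren{s}}=P\paren{s}/Q\paren{s}$ and $\mathsf{E}_P\sqparen{1}=1$ hold verbatim and the elementary convexity inequality $e^{x}\geq 1+x$ carries the entire argument; the continuous case is identical with integrals replacing sums.
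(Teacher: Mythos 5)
Your proof is correct, and it is genuinely more self-contained than what the paper offers: the paper does not prove Lemma~4 at all, but simply cites \cite{nguyen2010estimating,poole2018variational}, where the bound is obtained from the general $f$-divergence variational representation via the Fenchel--Legendre conjugate of $u\mapsto u\log u$. Your route replaces that duality machinery with a direct two-step verification: first check that $f^\star$ achieves $D_{KL}\left[\left.P\right\|Q\right]$ (using $e^{-1}\mathsf{E}_Q\left[e^{f^\star}\right]=\mathsf{E}_Q\left[P/Q\right]=1$), then show every competitor $f=f^\star+g$ loses exactly $\mathsf{E}_P\left[e^{g}-g-1\right]\geq 0$ by the elementary inequality $e^{x}\geq 1+x$. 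This buys a short, fully elementary argument that also identifies the maximizer as unique $P$-almost surely, which the cited duality argument does not make as explicit. Your handling of the technicalities is also the right one: restricting to $f$ with $\mathsf{E}_P\left[\left|f\right|\right]<\infty$ and $\mathsf{E}_Q\left[e^{f}\right]<\infty$ avoids the $\infty-\infty$ ambiguity in $J\left(f\right)$, and the common-support hypothesis is exactly what makes $f^\star$ finite and the reparametrization $f=f^\star+g$ legitimate. The only caveat worth recording is that when $D_{KL}\left[\left.P\right\|Q\right]=+\infty$ the gap computation degenerates, but the claimed identity then holds trivially since $J\left(f^\star_n\right)$ along truncations of $f^\star$ already drives the supremum to $+\infty$.
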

	\begin{align}
			I_\theta\left( \hat{X}^T;Y^T \right) & =\sum_{t=0}^T{I_\theta\left(\left. \hat{X}_t;Y^t\right|\hat{X}^{t-1} \right)}
			\\
			&\overset{(a)}{=}\sum_{t=0}^T{\mathsf{E}\left[ \log \frac{P_\theta\left(\left. \hat{X}_t\right|\hat{X}^{t-1},Y^t \right)}{P\left( \tilde{X} \right)} \right] -\mathsf{E}\left[ \log \frac{P_\theta\left(\left. \hat{X}_t\right|\hat{X}^{t-1} \right)}{P\left( \tilde{X} \right)} \right]}
			\\
			&=\sum_{t=0}^T{\mathsf{E}\left[ \log \frac{P_\theta\left( \hat{X}^t,Y^t \right)}{P\left( \tilde{X} \right) P_\theta\left( \hat{X}^{t-1},Y^t \right)} \right] -\mathsf{E}\left[ \log \frac{P_\theta\left( \hat{X}^t \right)}{P\left( \tilde{X} \right) P_\theta\left( \hat{X}^{t-1} \right)} \right]}
			\\
			&=\sum_{t=0}^T{D_{KL}\left[ \left. P_\theta\left(\hat{X}^t,Y^t\right)\right\|P_\theta\left(\hat{X}^{t-1},Y^t\right)P\left(\tilde{X}\right) \right] -D_{KL}\left[\left. P_\theta\left({\hat{X}^t}\right)\right\|P_\theta\left(\hat{X}^{t-1}\right)P\left(\tilde{X}\right) \right]}, \nonumber
	\end{align}
	where we impose a discrete uniform variable $\tilde{X}$ in $\paren{a}$ to form two KL-divergence. For each KL-divergence, we have the following variational formulations based on the Lemma \ref{Lm.KLV},
	\begin{align}
			D_{KL}\left[ \left. P_\theta\left(\hat{X}^t,Y^t\right)\right\|P_\theta\left(\hat{X}^{t-1},Y^t\right)P\left(\tilde{X}\right) \right] =\mathop {\mathrm{sup}} \limits_{g_t}\mathsf{E}\left[  g_t\left( \hat{X}_t,\hat{X}^{t-1},Y^t \right) -e^{g_t\left( \tilde{X},\hat{X}^{t-1},Y^t \right) -1} \right] , \nonumber
	\end{align}
	\begin{align}
			D_{KL}\left[\left. P_\theta\left({\hat{X}^t}\right)\right\|P_\theta\left(\hat{X}^{t-1}\right)P\left(\tilde{X}\right) \right] =\mathop {\mathrm{sup}} \limits_{f_t}\mathsf{E}\left[ f_t\left( \hat{X}_t,\hat{X}^{t-1} \right) - e^{f_t\left( \tilde{X},\hat{X}^{t-1} \right) -1}  \right] . \nonumber
	\end{align}
	By optimizing functions over the whole horizon, we have that,
	\begin{align}
			\sum_{t=0}^T{D_{KL}\left[\left. P_\theta\left(\hat{X}^t,Y^t\right)\right\|P_\theta\left(\hat{X}^{t-1},Y^t\right)P\left(\tilde{X}\right) \right] }=\mathop {\mathrm{sup}} \limits_{\{g_t\}_{t=0}^T}\mathsf{E}\left[ \sum_{t=0}^T{g_t\left( \hat{X}_t,\hat{X}^{t-1},Y^t \right) -e^{g_t\left( \tilde{X},\hat{X}^{t-1},Y^t \right) -1}} \right] , \nonumber
	\end{align}
	\begin{align}
			\sum_{t=0}^T{D_{KL}\left[\left. P_\theta\left({\hat{X}^t}\right)\right\|P_\theta\left(\hat{X}^{t-1}\right)P\left(\tilde{X}\right) \right]}=\mathop {\mathrm{sup}} \limits_{\{f_t\}_{t=0}^T}\mathsf{E}\left[ \sum_{t=0}^T{ f_t\left( \hat{X}_t,\hat{X}^{t-1} \right) - e^{f_t\left( \tilde{X},\hat{X}^{t-1} \right) -1}} \right], \nonumber
	\end{align}
	where the optimal functions are,
	$$g_t^\star\left( \hat{x}_t,\hat{x}^{t-1},y^t \right) =\log \frac{P_\theta\left(\left. \hat{x}_t\right|\hat{x}^{t-1},y^t \right)}{P\left( \tilde{x} \right)}+1,$$
	$$f_t^\star\left( \hat{x}_t,\hat{x}^{t-1} \right) =\log \frac{P_\theta\left(\left. \hat{x}_t\right|\hat{x}^{t-1} \right)}{P\left( \tilde{x} \right)}+1.$$
	Thus, the information loss could be obtained by,
	$$\log \frac{P_\theta\left(\left. \hat{x}_t\right|\hat{x}^{t-1},y^t \right)}{P_\theta\left(\left. \hat{x}_t\right|\hat{x}^{t-1} \right)}=g_t^\star\left( \hat{x}_t,\hat{x}^{t-1},y^t \right) -f_t^\star\left( \hat{x}_t,\hat{x}^{t-1} \right) .$$

\section{Proof of Lemma \ref{Lm.MISimp}}\label{App: MIE}
The mutual information between $\hat{X}^T$ and $Y^T$ can be expanded as 
\begin{align}
		I\left( \hat{X}^T;Y^T \right) & \overset{(a)}{=}\sum_{t=0}^T{I\left(\left. \hat{X}_t;Y^T\right|\hat{X}^{t-1} \right)}\\
		&\overset{(b)}{=}\sum_{t=0}^T{I\left(\left. \hat{X}_t;Y^{t-1}\right|\hat{X}^{t-1} \right)+I\left(\hat{X}_t;Y^{T}_{t}|\hat{X}^{t-1},Y^{t-1} \right)}\\
		&\overset{(c)}{=}\sum_{t=0}^T{I\left(\left. \hat{X}_t;Y^{t-1}\right|\hat{X}^{t-1} \right)},
		   \nonumber
\end{align}
where $\paren{a}$ and $\paren{b}$ follows from the chain rule, $\paren{c}$ follows from the fact that the $Y_t^T$ is independent from $\hat{X}^t$ given the $Y^{t-1}$, i.e., the Markov chain $\hat{X}_t \rightarrow Y^{t-1} \rightarrow Y_t^{T}$.\\

\section{Proof of Lemma \ref{Lm.BeliefStateLoss}}\label{App:Lemma3}
Using the definition of $b_t$, we have 
\begin{equation} 
	\begin{split}
		b_{t+1}\left( y^t,\tilde{z}^{t+1} \right) &=\frac{P\left(\left. y^t,\tilde{z}^{t+1},\hat{X}_t\right|\hat{X}^{t-1} \right)}{P\left(\left. \hat{X}_t\right|\hat{X}^{t-1} \right)}
		\\
		&\overset{(a)}{=}\frac{\int{P\left(\left. \tilde{z}_{t+1}\right|x_{t+1},y^t,\tilde{z}^t,\hat{X}^t \right) p\left(\left. x_{t+1}\right|y^t,\tilde{z}^t,\hat{X}^t \right)a_t\left(\left. \hat{X}_t\right|\tilde{z}^t \right) b_t\left( y^{t-1},\tilde{z}^t \right) P\left(\left. y_t\right|y_{t-1} \right)  dx_{t+1}}}{\sum_{y^{t-1},\tilde{z}^t}{a_t\left(\left. \hat{X}_t\right|\tilde{z}^t \right) b_t\left( y^{t-1},\tilde{z}^t \right)}}
		\\
		&\overset{(b)}{=}\frac{a_t\left(\left. \hat{X}_t\right|\tilde{z}^t \right) b_t\left( y^{t-1},\tilde{z}^t \right) P\left(\left. y_t\right|y_{t-1} \right) \int{P\left(\left. \tilde{z}_{t+1}\right|x_{t+1} \right) p\left(\left. x_{t+1}\right|y^t,\tilde{z}^t \right) dx_{t+1}}}{\sum_{y^{t-1},\tilde{z}^t}{a_t\left(\left. \hat{X}_t\right|\tilde{z}^t \right) b_t\left( y^{t-1},\tilde{z}^t \right)}}
		,
	\end{split}
\end{equation}
where the numerator in $\paren{a}$ is derived by expanding the $P\left(\left. y^t,\tilde{z}^{t+1},\hat{X}_t\right|\hat{X}^{t-1} \right)$ as the multiplication of the system dynamics, policy and $b_t$, the denominator in $\paren{a}$ shows that $P\left( \left. \hat{X}_t\right|\hat{X}^{t-1}\right)$ is a linear combination of $\pi _t\left(\left. \hat{X}_t\right|\tilde{z}^t,\hat{X}^{t-1} \right)$ and  $b_t\left( y^{t-1},\tilde{z}^t \right)$. As for $\paren{b}$, $P\left(\left. \tilde{z}_{t+1}\right|x_{t+1},y^t,\tilde{z}^t,\hat{X}^t \right)=P\left(\left. \tilde{z}_{t+1}\right|x_{t+1}\right)$ is due to the observation probability, and $p\left(\left. x_{t+1}\right|y^t,\tilde{z}^t,\hat{X}^t \right)=p\left(\left. x_{t+1}\right|y^t,\tilde{z}^t \right)$ is due to the Markov chain $\hat{X}^t\rightarrow \left(\tilde{Z}^t,Y^{t}\right) \rightarrow X_{t+1}$.\\
For the conditional distortion, we have that
\begin{align}
		&\mathsf{E}\left[ \left. l_{d}\left( X_t,\hat{X}_t \right) \right|\hat{X}^{t-1} \right] 
		\\
		&=\sum_{y^{t-1},\tilde{z}^t,\hat{x}_t}{\int{p\left(\left. x_t\right|y^{t-1},\tilde{z}^t,\hat{X}^t,a ^{t-1} \right) \pi _t\left(\left. \hat{x}_t\right|\tilde{z}^t,\hat{X}^{t-1} \right) P\left(\left. y^{t-1},\tilde{z}^t\right|\hat{X}^{t-1},a ^{t-1} \right) l_{d}\left( x_t,\hat{x}_t \right) dx_t}}
		\\
		&=\sum_{y^{t-1},\tilde{z}^t,\hat{x}_t}{\int{p\left(\left. x_t\right|y^{t-1},\tilde{z}^t \right) a_t\left(\left. \hat{x}_t\right|\tilde{z}^t \right) b_t\left( y^{t-1},\tilde{z}^t \right) l_{d}\left( x_t,\hat{x}_t \right) dx_t}}.\nonumber
\end{align}
As for the conditional information loss, 
\begin{align}
		&\mathsf{E}\left[ \left. \log \frac{P\left(\left. \hat{X}_t,Y^{t-1}\right|\hat{X}^{t-1} \right)}{P\left(\left. \hat{X}_t\right|\hat{X}^{t-1} \right) P\left(\left. Y^{t-1}\right|\hat{X}^{t-1} \right)} \right|\hat{X}^{t-1} \right] 
		\\
		&=\sum_{y^{t-1},\hat{x}_t}{P\left(\left. \hat{x}_t,y^{t-1}\right|\hat{x}^{t-1} \right) \log \frac{P\left(\left. \hat{x}_t,y^{t-1}\right|\hat{x}^{t-1} \right)}{P\left(\left. \hat{x}_t\right|\hat{x}^{t-1} \right) P\left(\left. y^{t-1}\right|\hat{x}^{t-1} \right)}},
\end{align}
each probability inside satisfies
$$P\left(\left. \hat{x}_t,y^{t-1}\right|\hat{x}^{t-1} \right) =\sum_{\tilde{z}^t}{a_t\left(\left. \hat{x}_t\right|\tilde{z}^t,\hat{x}^{t-1} \right) P\left(\left. y^{t-1},\tilde{z}^t\right|\hat{x}^{t-1} \right)}=\sum_{\tilde{z}^t}{a_t\left( \left. \hat{x}_t\right|\tilde{z}^t \right) b_t\left( y^{t-1},\tilde{z}^t \right)},$$

$$P\left(\left. \hat{x}_t\right|\hat{x}^{t-1} \right) =\sum_{y^{t-1},\tilde{z}^t}{a_t\left(\left. \hat{x}_t\right|\tilde{z}^t,\hat{x}^{t-1} \right) P\left(\left. y^{t-1},\tilde{z}^t\right|\hat{x}^{t-1} \right)}=\sum_{y^{t-1},\tilde{z}^t}{a_t\left( \left. \hat{x}_t\right|\tilde{z}^t \right) b_t\left( y^{t-1},\tilde{z}^t \right)},$$

$$P\left(\left. y^{t-1}\right|\hat{x}^{t-1} \right) =\sum_{\tilde{z}^t}{P\left(\left. y^{t-1},\tilde{z}^t\right|\hat{x}^{t-1} \right)}=\sum_{\tilde{z}^t}{b_t\left( y^{t-1},\tilde{z}^t \right)}.$$
Therefore,
\begin{align}
		&\mathsf{E}\left[ \left. \textit{l}_{d}\left( X_t,\hat{X}_t \right)+\lambda \log \frac{P\left(\left. \hat{X}_t,Y^{t-1}\right|\hat{X}^{t-1} \right)}{P\left(\left. \hat{X}_t\right|\hat{X}^{t-1} \right)P\left(Y^{t-1}\left|\hat{X}^{t-1}\right. \right)}\right|\hat{X}^{t-1} \right]
		\\&=\sum_{y^{t-1},\tilde{z}^t,\hat{x}_t}a_t\left(\left. \hat{x}_t\right|\tilde{z}^t \right) b_t\left( y^{t-1},\tilde{z}^t \right) \Big[ \int{p\left(\left. x_t\right|y^{t-1},\tilde{z}^t \right) \textit{l}_{d}\left( x_t,\hat{x}_t \right) dx_t}
		\\
		&\quad\quad+\lambda \log \frac{\sum_{\tilde{z}^t}{a_t\left(\left. \hat{x}_t\right|\tilde{z}^t \right) b_t\left( y^{t-1},\tilde{z}^t \right)}}{\left( \sum_{y^{t-1},\tilde{z}^t}{a_t\left(\left. \hat{x}_t\right|\tilde{z}^t \right) b_t\left( y^{t-1},\tilde{z}^t \right)} \right) \left( \sum_{\tilde{z}^t}{b_t\left( y^{t-1},\tilde{z}^t \right)} \right)} \Big]. \nonumber
\end{align}

\end{document}